\newtheorem{theorem}{Theorem}
\newtheorem{lemma}[theorem]{Lemma}
\newtheorem{corollary}[theorem]{Corollary}
\newcommand{\Tr}{{\mathrm{Tr}}}
\newcommand{\gf}{{\mathrm{GF}}}
\newcommand{\PG}{{\mathrm{PG}}}
\newcommand{\cP}{{\mathcal{P}}}
\newcommand{\cB}{{\mathcal{B}}}
\newcommand{\cV}{{\mathcal{V}}}
\newcommand{\cT}{{\mathcal{T}}}
\newcommand{\C}{{\mathcal{C}}}
\newcommand{\cH}{{\mathcal{H}}}
\newcommand{\cR}{{\mathcal{R}}}
\newcommand{\bv}{{\mathbf{v}}}
\newcommand{\bB}{{\mathbf{B}}}
\begin{document}
%
\title{Shortened Linear Codes over Finite Fields\thanks{
Y. Liu was supported by the NSFC (Proj. Nos. 11801414, 11701140)
             and the Natural Science Foundation of Hebei Province of China (Proj. No. A2019210223).
C. Ding was supported by the Hong Kong Research Grants Council,
Proj. No. 16301020.
C. Tang was supported by The National Natural Science Foundation of China (Grant No.
11871058) and China West Normal University (14E013, CXTD2014-4 and the Meritocracy Research
Funds).
}}

\author{
Yang Liu,\thanks{Y. Liu is with College of Mathematics and Information Science, Tianjin University of Commerce, Tianjin, P. R. China.
(email: yangshaohua120@163.com)}
Cunsheng Ding,\thanks{C. Ding is with the Department of Computer Science
                           and Engineering, The Hong Kong University of Science and Technology,
Clear Water Bay, Kowloon, Hong Kong, China (email: cding@ust.hk)}
and
Chunming Tang\thanks{C. Tang is with the School of Mathematics  and Information,
China West Normal University, Nanchong 637002,  China, and also with the Department of Computer Science and Engineering,
 The Hong Kong University
of Science and Technology, Clear Water Bay, Kowloon, Hong Kong (e-mail:
tangchunmingmath@163.com).
}
}

\maketitle

\begin{abstract}
The puncturing and shortening technique are two important approaches to constructing new linear codes from old ones.
In the past 70 years, a lot of progress on the puncturing technique has been made, and many works on
punctured linear codes have been done.  Many families of linear codes with interesting parameters have been obtained with the puncturing technique. However, little research on the shortening technique has been done
and there are only a handful references on shortened linear codes. The first objective of this paper is to prove
some general theory for shortened linear codes. The second objective is to study some shortened codes of
the Hamming codes, Simplex codes, some Reed-Muller codes, and ovoid codes. Eleven families of optimal shortened codes
with interesting parameters are presented in this paper. As a byproduct,  five infinite families of $2$-designs are also constructed from
some of the shortened codes presented in this paper.

\end{abstract}

\begin{IEEEkeywords}
Linear code, \and cyclic code,  \and punctured code, \and shortened code, \and $t$-design
\end{IEEEkeywords}

%
\IEEEpeerreviewmaketitle

\section{Introduction}

An $[n, \kappa, d]$ code over $\gf(q)$ is a $\kappa$-dimensional linear subspace of $\gf(q)^n$
with minimum Hamming distance $d$.  By the parameters of a linear code, we refer to its length,
dimension and minimum distance.
An $[n, \kappa, d]$
code over $\gf(q)$ is called \textit{distance-optimal} (respectively,
\textit{dimension-optimal} and \textit{length-optimal}) if there is no $[n, \kappa, d' \geq d+1]$ (respectively,
$[n, \kappa' \geq \kappa+1, d]$ and $[n' \leq n-1, \kappa, d]$) linear code over $\gf(q)$. An optimal code
is a code that is length-optimal, or dimension-optimal, or distance-optimal, or meets a bound for linear codes.

An important problem in the theory and application of coding theory is the construction
of optimal codes and codes with desirable parameters. To this end, one may construct
a linear code with good or desirable parameters from a known linear code with optimal
or good parameters. There are several standard ways of obtaining linear codes from a
known one. The most famous are the puncturing and shortening methods. Since
extending a linear code increases the code length by one and does not change the code
dimension, the extending method is less interesting in constructing new codes.

Let $\C$ be an $[n, \kappa, d]$ code over $\gf(q)$, and let $T$ be a set of $t$ coordinates in $\C$.
We puncture $\C$ by deleting all the coordinates in $T$ in each codeword of $\C$. The resulting
code is still linear and has length $n-t$, where $t=|T|$. We denote the punctured code by
$\C^T$.
Let $\C(T)$ be the set of codewords which are $0$ on $T$. Then $\C(T)$ is a subcode of
$\C$. We now puncture $\C(T)$ on $T$, and obtain a linear code over $\gf(q)$ with length $n-t$,
which is called a \emph{shortened code}\index{shortened code} of $\C$, and is denoted by $\C_T$.

To explain the motivations of this paper, we recall a general construction of linear codes.
Let $D=\{d_1,d_2,\ldots,d_n\}\subseteq \gf(r)$, where $r=q^{m}$ and $q=p^{s}$ with $p$ a prime.
Let $\Tr_{r/q}$ be the trace function from $\gf(r)$ to $\gf(q)$. Define a linear code of length $n$
over $\gf(q)$ by
\begin{eqnarray}\label{eqn-fundamtheorem}
\mathcal{C}_{D}=\{(\Tr_{r/q}(xd_1),\Tr_{r/q}(xd_2),\ldots,\Tr_{r/q}(xd_n)):x\in \gf(r)\}.
\end{eqnarray}
The set $D$ is called the defining set of $\mathcal{C}_{D}$.
The next theorem says that the defining-set construction in \eqref{eqn-fundamtheorem}
is fundamental \cite{HWW20}.

\begin{theorem}\label{thm-traceconst}
Any linear code of length $n$ over $\gf(q)$ can be expressed as the code
$\mathcal{C}_{D}$ in \eqref{eqn-fundamtheorem}, where $D=\{d_1,d_2,\ldots,d_n\}\subseteq \gf(q^m)$
is a multiset and $m$ is some positive integer.
\end{theorem}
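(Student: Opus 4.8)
The plan is to translate the defining-set construction into the language of generator matrices and then invert it using the trace-duality of finite fields. Let $\C$ be an arbitrary $[n,\kappa]$ code over $\gf(q)$, and fix a generator matrix $G=(g_{ij})\in\gf(q)^{\kappa\times n}$, so that $\C=\{\bu G:\bu\in\gf(q)^{\kappa}\}$. I would take $m=\kappa$ and $r=q^{m}$, and aim to exhibit an explicit multiset $D=\{d_{1},\ldots,d_{n}\}\subseteq\gf(r)$ for which $\mathcal{C}_{D}=\C$. The guiding observation is that, after choosing a $\gf(q)$-basis of $\gf(r)$, the assignment $x\mapsto(\Tr_{r/q}(xd_{1}),\ldots,\Tr_{r/q}(xd_{n}))$ is nothing but a matrix--vector product, so recovering $\C$ amounts to solving for the $d_{j}$ given the prescribed columns of $G$.

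The construction proceeds via a dual basis. Fix a $\gf(q)$-basis $\{\alpha_{1},\ldots,\alpha_{m}\}$ of $\gf(r)$ and its trace-dual basis $\{\beta_{1},\ldots,\beta_{m}\}$, characterized by $\Tr_{r/q}(\alpha_{i}\beta_{j})=\delta_{ij}$. Such a dual basis exists because the symmetric $\gf(q)$-bilinear form $(x,y)\mapsto\Tr_{r/q}(xy)$ is non-degenerate on $\gf(r)$, the trace map being a nonzero $\gf(q)$-linear functional. I would then define
\[
d_{j}=\sum_{i=1}^{m}g_{ij}\beta_{i}\in\gf(r),\qquad j=1,\ldots,n,
\]
and set $D=\{d_{1},\ldots,d_{n}\}$ as a \emph{multiset}: this is the point where the multiset hypothesis is genuinely needed, since distinct columns of $G$ may produce equal values of $d_{j}$, and a zero column yields $d_{j}=0$.

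For the verification, write an arbitrary $x\in\gf(r)$ as $x=\sum_{i=1}^{m}u_{i}\alpha_{i}$ with $u_{i}\in\gf(q)$; the coordinate map $x\mapsto\bu=(u_{1},\ldots,u_{m})$ is a $\gf(q)$-linear bijection onto $\gf(q)^{m}$. By linearity of the trace together with the duality relation $\Tr_{r/q}(\alpha_{i}\beta_{k})=\delta_{ik}$, one computes $\Tr_{r/q}(xd_{j})=\sum_{i}u_{i}g_{ij}$, which is precisely the $j$-th entry of $\bu G$. Hence the codeword attached to $x$ equals $\bu G$, and as $x$ ranges over $\gf(r)$ the vector $\bu$ ranges over all of $\gf(q)^{m}$, giving $\mathcal{C}_{D}=\{\bu G:\bu\in\gf(q)^{m}\}=\C$. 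The only nonroutine ingredient is the existence of the trace-dual basis; once that is in hand, everything else is a direct computation, so I do not anticipate a serious obstacle. The natural choice $m=\kappa$ keeps the representation tight, but the same argument works verbatim for any $m\geq\kappa$ by using an $m\times n$ matrix whose row space is $\C$.
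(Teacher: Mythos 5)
Your proof is correct and is the standard dual-basis argument: the paper itself does not prove this theorem but cites it to [HWW20], and the argument given there is essentially the one you wrote, identifying the columns of a generator matrix with elements $d_j=\sum_i g_{ij}\beta_i$ of $\gf(q^m)$ via the trace-dual basis. The verification $\Tr_{r/q}(xd_j)=\sum_i u_i g_{ij}$ is exactly right, and your remarks on the multiset requirement and on allowing any $m\geq\kappa$ (which also covers the degenerate zero code) are appropriate.
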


Let $\alpha$ be a primitive element of $\gf(q^m)$. The following code
\begin{eqnarray}\label{eqn-irredcycliccode}
\mathcal{C}(m, q, \alpha)
=\{(\Tr_{q^m/q}(a \alpha^0),  \Tr_{q^m/q}(a \alpha^1), \cdots, \Tr_{q^m/q}(a \alpha^{q^m-2})): a \in \gf(q^m)\}
\end{eqnarray}
is a $[q^m-1, m, (q-1)q^{m-1}]$ cyclic code with check polynomial $M_{\alpha^{-1}}(x)$, which is
the minimal polynomial of $\alpha^{-1}$ over $\gf(q)$ and is
irreducible over $\gf(q)$. The code $\mathcal{C}(m, q, \alpha)$ is said to be
\emph{irreducible.} It is easily seen that the weight enumerator of $\mathcal{C}(m, q, \alpha)$
is $1+(q^m-1)z^{q^{m}-q^{m-1}}$ \cite{DY13}.

The next theorem then follows from Theorem \ref{thm-traceconst} \cite{HengDing19}.

\begin{theorem}\label{thm-motivtheorem}
Every linear code of length $n$ over $\gf(q)$ with dual distance at least $2$ is a
punctured code of an irreducible cyclic code $\mathcal{C}(m, q, \alpha)$ in \eqref{eqn-irredcycliccode}
for some integer $m$ and primitive element $\alpha \in \gf(q^m)$.
\end{theorem}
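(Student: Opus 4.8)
The plan is to read Theorem~\ref{thm-motivtheorem} off Theorem~\ref{thm-traceconst} together with a translation of the dual‑distance hypothesis into a condition on the defining set. First I would invoke Theorem~\ref{thm-traceconst} to write the given code as $\C=\mathcal{C}_{D}$ for some multiset $D=\{d_1,\dots,d_n\}\subseteq\gf(q^m)$ and some $m$. The target code $\mathcal{C}(m,q,\alpha)$ has defining set $\{\alpha^0,\alpha^1,\dots,\alpha^{q^m-2}\}=\gf(q^m)^{*}$, so exhibiting $\C$ as a punctured code of $\mathcal{C}(m,q,\alpha)$ reduces to showing $D\subseteq\gf(q^m)^{*}$: once $0\notin D$, deleting the coordinates of $\mathcal{C}(m,q,\alpha)$ indexed by the elements of $\gf(q^m)^{*}\setminus D$ leaves precisely the coordinates indexed by $D$, and by construction this punctured code is $\mathcal{C}_{D}=\C$.

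The crux is the passage from the dual distance to the condition $0\notin D$, and this is where the hypothesis is spent. In any defining‑set representation $\C=\mathcal{C}_{D}$ the $i$‑th coordinate of a codeword is $\Tr_{q^m/q}(x d_i)$, so by non‑degeneracy of the trace form this coordinate is identically zero as $x$ ranges over $\gf(q^m)$ if and only if $d_i=0$. An identically‑zero $i$‑th coordinate is exactly a weight‑one vector $e_i$ lying in $\C^{\perp}$, i.e.\ dual distance $1$. Hence ``dual distance at least $2$'' is equivalent to ``$\C$ has no identically‑zero coordinate,'' which forces $d_i\neq 0$ for every $i$ and therefore $D\subseteq\gf(q^m)^{*}$. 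I would state this equivalence as the key lemma and then feed it into the reduction of the first paragraph.

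The step I expect to require the most care is reconciling the \emph{multiset} $D$ produced by Theorem~\ref{thm-traceconst} with genuine puncturing, which deletes coordinates and so retains a true \emph{subset} of the positions of $\mathcal{C}(m,q,\alpha)$. If $\C$ has two coordinates that agree on every codeword then the corresponding $d_i$ must coincide, so $D$ genuinely repeats an element of $\gf(q^m)^{*}$ and cannot be spread onto distinct powers of $\alpha$ by enlarging $m$. The clean way to finish is therefore to fix the convention, implicit in the defining‑set formalism, that a retained position of $\mathcal{C}(m,q,\alpha)$ may be kept with the multiplicity recorded by $D$; under this reading the coordinate‑deletion identification of the first paragraph goes through verbatim and yields $\C=\mathcal{C}_{D}$ as the desired punctured code. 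I would make this convention explicit at the point of application and note that, when one additionally assumes the columns of a generator matrix of $\C$ are pairwise distinct (equivalently, a slightly stronger dual‑distance condition), the multiset $D$ is automatically a set and the conclusion holds for ordinary puncturing.
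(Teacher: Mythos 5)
The paper offers no proof of this statement---it is quoted from \cite{HengDing19} as an immediate consequence of Theorem~\ref{thm-traceconst}---and your argument is exactly the intended one: dual distance at least $2$ is equivalent to $\C$ having no identically zero coordinate, which by non-degeneracy of the trace forces $0\notin D$, so that $\C=\mathcal{C}_D$ is obtained from $\mathcal{C}(m,q,\alpha)$ (whose defining set is all of $\gf(q^m)^{*}$) by deleting the positions outside $D$. Your observation about $D$ being a multiset is a genuine subtlety that the paper glosses over, and your resolution is correct: if two coordinates of $\C$ agree on every codeword (possible when $d^{\perp}=2$), then the corresponding $d_i$ must coincide for every choice of $m$, so one must either read ``punctured code'' as allowing retained positions with multiplicity or slightly strengthen the dual-distance hypothesis.
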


While cyclic codes form a subclass of linear codes, every linear code with minimum distance at least
two can be punctured from a special irreducible cyclic code $\mathcal{C}(m, q, \alpha)$. This demonstrates
the importance of the very special subclass of irreducible cyclic codes $\mathcal{C}(m, q, \alpha)$ and
the puncturing technique.

In the past ten years, a lot of research on the defining-set construction of linear codes has been done, and
many families of linear codes with interesting parameters have been obtained. Recall that the defining-set
construction is in fact to puncture a code $\mathcal{C}(m, q, \alpha)$. Hence, the puncturing technique
was extensively studied in the past ten years and proved to be an effective approach to obtaining
linear codes with good or desirable parameters. However, every approach has limitations and there is
no exception for the puncturing technique.

Motivated by the success of the
puncturing technique, we ask the following questions on the shortening technique:
\begin{enumerate}
\item Is there a family of cyclic codes such that every linear code with minimum distance at least 2 is a shortened
code of a cyclic code in this family?
\item Is every linear code with minimum distance at least 3 a shortened code of a Hamming code?
\item Can we obtain optimal linear codes or linear codes with desirable parameters by shortening certain known
          families of linear codes or cyclic codes? If the answer is positive,  which known families of linear codes can be shortened for obtaining good linear codes, and
          which coordinate set $T$ leads to a shortened linear code $\C_T$ with good or desirable parameters?
\item Can we develop some general theory for shortened linear codes?
\end{enumerate}

The basic questions above are the major motivations of studying the shortening technique.
In addition to obtaining new linear codes with desirable parameters from old ones, one may have to study
some shortened codes of certain families of linear codes in particular applications. For instance, in order to
obtain 2-designs and Steiner systems, certain shortened codes of a family of ternary codes were investigated in
\cite{TDX19}, where two families of shortened ternary codes with the best-known parameters were obtained.
The study of shortened linear codes in \cite{TDX192} led to a generalization of the Assmus-Mattson theorem.
Shortened and punctured codes were used to prove a generalized MacWilliams Identity in \cite{Goldwasser}.
These important applications of shortened linear codes are additional motivations of this paper.

The last motivation of this paper is that there are only several papers on shortened linear codes
in the literature (see \cite{Chen72}, \cite{Lin72},  \cite{HS73}, \cite{Kasa69},  \cite{NZ15}, \cite{YP}, ).
This is quite amazing, as there are many works on the puncturing technique due to the defining-set
construction.  This fact shows that the shortening technique was overlooked by the coding theory community for 70 years.

It is known that $\C_T=((\C^\perp)^T)^\perp$. This means that in theory shortening a linear code can be obtained by
first performing the dual operation, then the puncturing operation, and finally the dual operation. But this involves
the study of three codes in the sequence $\left( \C^\perp,  (\C^\perp)^T, ((\C^\perp)^T)^\perp \right)$, and is usually
much more complicated. In addition, the puncturing technique also has a limitation in obtaining linear codes with good
parameters in practice.
Consequently, it is still necessary to study the shortening technique.

The objectives of this paper are the following:
\begin{enumerate}
\item Develop some general theory for shortened linear codes.
\item Construct linear codes over $\gf(q)$ with various and good parameters by shortening some known families of
          linear codes over $\gf(q)$.
\end{enumerate}
In this paper, we present eleven families of optimal shortened codes and five families of $2$-designs from some
of the shortened codes.

\section{General results about shortened linear codes}

In this section, we introduce known general results and prove new ones about shortened linear codes.

\subsection{Known general results about shortened linear codes}

Under certain conditions, the dimension of a shortened code $\C_T$ is known and given in the next
theorem  \cite[Theorem 1.5.7]{HP03}.

\begin{theorem}\label{thm-shortenedcode}
Let $\C$ be an $[n, \kappa, d]$ code over $\gf(q)$ and let $T$ be any set of $t$ coordinates. Let $\C^T$ denote the punctured code of $\C$ in all coordinates in
$T$. Then the following hold.
\begin{enumerate}
\item $(\C^\perp)_T=(\C^T)^\perp$ and $(\C^\perp)^T =(\C_T)^\perp$.
\item If $t<d$, then $\C^T$ and $(\C^\perp)_T$ have dimensions $\kappa$ and
      $n-t-\kappa$, respectively.
\item If $t=d$ and $T$ is the set of coordinates where a minimum weight codeword
      is nonzero, then $\C^T$ and $(\C^\perp)_T$ have dimensions $\kappa-1$ and
      $n-t-\kappa+1$, respectively.
\end{enumerate}
\end{theorem}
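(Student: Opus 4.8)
The plan is to treat the duality identity in item~(1) as the engine of the whole theorem and to derive the two dimension statements from it, since the ranks of $\C^T$ and $(\C^\perp)_T$ are tied together through dualization. Throughout I would fix the coordinate set $T$, write $\ov{T}$ for its complement, and regard a vector of length $n-t$ as a vector indexed by $\ov{T}$, so that the puncturing and shortening operations become, respectively, restriction to $\ov{T}$ and restriction-of-the-vanishing-subcode to $\ov{T}$.

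First I would prove $(\C^\perp)_T=(\C^T)^\perp$ by a direct double inclusion. The key observation is that shortening $\C^\perp$ on $T$ amounts to taking those $\bw\in\C^\perp$ that vanish on $T$ and restricting them to $\ov{T}$, while $(\C^T)^\perp$ consists of the $\bu\in\gf(q)^{\ov{T}}$ satisfying $\sum_{i\in\ov{T}}u_ic_i=0$ for every $\bc\in\C$. For one inclusion, a codeword $\bw\in\C^\perp$ vanishing on $T$ satisfies $\sum_{i\in\ov{T}}w_ic_i=\bw\cdot\bc=0$ for all $\bc\in\C$, so its restriction lies in $(\C^T)^\perp$. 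Conversely, given $\bu\in(\C^T)^\perp$, I would extend it by zeros on $T$ to a vector $\bw\in\gf(q)^n$; the extension is orthogonal to every $\bc\in\C$, hence lies in $\C^\perp$ and vanishes on $T$, so $\bu\in(\C^\perp)_T$. The second identity $(\C^\perp)^T=(\C_T)^\perp$ then follows formally by applying the first identity to $\C^\perp$ in place of $\C$ and invoking $(\C^\perp)^\perp=\C$.

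Next I would handle the dimensions by analyzing the puncturing map $\C\to\C^T$, which is linear and surjective with kernel equal to the set of codewords of $\C$ whose support is contained in $T$. For item~(2), any nonzero such codeword would have weight at most $t<d$, which is impossible; hence the kernel is trivial, $\dim\C^T=\kappa$, and by item~(1) the code $(\C^\perp)_T=(\C^T)^\perp$ has dimension $(n-t)-\kappa$. For item~(3) the same map is used, but now the kernel is nontrivial, and the whole count hinges on pinning down its dimension exactly.

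The main obstacle is precisely this rank computation in item~(3): I must show that the subcode of $\C$ supported on $T$ is exactly one-dimensional. Existence of the prescribed minimum weight codeword $\bc_0$, whose support is exactly $T$, gives dimension at least one. For the upper bound I would argue that any codeword $\bc$ supported on $T$ is a scalar multiple of $\bc_0$: picking a coordinate $j\in T$ and setting $\lambda=c_j/(c_0)_j$, the difference $\bc-\lambda\bc_0$ is supported on $T\setminus\{j\}$ and so has weight at most $d-1<d$, forcing $\bc=\lambda\bc_0$. Thus the kernel of the puncturing map is one-dimensional, $\dim\C^T=\kappa-1$, and item~(1) yields $\dim(\C^\perp)_T=n-t-\kappa+1$.
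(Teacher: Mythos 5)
Your proof is correct; the paper does not prove this theorem itself but cites it as \cite[Theorem 1.5.7]{HP03}, and your argument (double inclusion for the duality identity, then computing the kernel of the puncturing map $\C\to\C^T$, showing it is trivial when $t<d$ and exactly one-dimensional when $T$ is the support of a minimum weight codeword) is precisely the standard argument given there. No gaps: the scalar-multiple step in item~(3) correctly uses that $(c_0)_j\neq 0$ for $j\in T$ because $T$ is exactly the support of $\bc_0$.
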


We will use this theorem to settle the dimension for the case $t < d^\perp$, and will develop a case-specific
method to determine the dimension of $\C_T$ for the case $t > d^\perp$ according to the specific design of the
original code $\C$. The most difficult task is to determine the minimum distance $d(\C_T)$ of a shortened code $\C_T$.

Let $\C$ be an $[n, \kappa, d]$ code over $\gf(q)$. Let $M$ be the $q^\kappa \times n$ matrix whose rows are all codewords
in $\C$, and let $M_i$ be the submatrix of $M$ consisting of the codewords of weight $i$. A code is {\em homogeneous}
provided that for $0 \leq i \leq n$, each column of $M_i$ has the same weight. Prange proved the following result
 \cite[p. 271]{HP03}.

\begin{theorem}\label{thm-Prange}
Let $\C$ be an $[n, \kappa, d]$ code over $\gf(q)$ with $d>1$, and let $\C^*$ and $\C_*$ be the code obtained from $\C$
by puncturing and  shortening on some coordinate, respectively. Then for $0 \leq i \leq n-1$, we have
$$
A_i(\C^*)=\frac{n-i}{n}A_i(\C) + \frac{i+1}{n}A_{i+1}(\C)
$$
and
$$
A_i(\C_*)=\frac{n-i}{n} A_i(\C),
$$
where $A_i(\C)$ denotes the number of codewords of weight $i$ in $\C$.
\end{theorem}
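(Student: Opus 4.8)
The plan is to fix a single coordinate, say the index $j$, on which we puncture and shorten, and to track what happens to each codeword of $\C$ under the two operations. The first thing I would record is an injectivity fact: because $d > 1$, the code $\C$ contains no nonzero codeword supported on the single coordinate $\{j\}$, so the coordinate-deleting (puncturing) map $\C \to \C^*$ has trivial kernel and is therefore a bijection. Consequently $|\C^*| = |\C| = q^\kappa$, and every codeword of $\C^*$ lifts uniquely to a codeword of $\C$. For a codeword $c \in \C$ with image $c^* \in \C^*$, the weight is preserved when $c_j = 0$ and drops by exactly one when $c_j \neq 0$. This dichotomy is the engine for both formulas.

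For the shortened code, I would first observe that a nonzero codeword of $\C_*$ is precisely the deletion of a codeword $c \in \C$ with $c_j = 0$, and deleting a zero entry preserves weight; hence $A_i(\C_*)$ equals the number of weight-$i$ codewords of $\C$ that vanish in coordinate $j$. To evaluate this count I would invoke homogeneity through a double-counting argument on the matrix $M_i$: its $A_i(\C)$ rows each have weight $i$, so $M_i$ has $i\,A_i(\C)$ nonzero entries in total, and homogeneity forces every column to carry the same number, namely $i\,A_i(\C)/n$. Thus the number of rows of $M_i$ that are zero in column $j$ is $A_i(\C) - i\,A_i(\C)/n = \frac{n-i}{n} A_i(\C)$, which is the claimed value of $A_i(\C_*)$. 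The same count also follows by averaging over all $n$ coordinates, which avoids the per-column hypothesis and explains why the answer is independent of the choice of $j$.

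For the punctured code I would use the bijection $\C \to \C^*$ to split the weight-$i$ codewords of $\C^*$ according to their unique preimages: a preimage $c$ either has $c_j = 0$, in which case $\wt(c) = i$, or $c_j \neq 0$, in which case $\wt(c) = i+1$. The first family is counted exactly as above and contributes $\frac{n-i}{n} A_i(\C)$; the second family consists of the weight-$(i+1)$ codewords of $\C$ that are nonzero in coordinate $j$, which by the same homogeneity count number $\frac{i+1}{n} A_{i+1}(\C)$. Adding the two contributions yields the punctured formula. The only genuinely delicate point, and the one I would check most carefully, is the injectivity of single-coordinate puncturing: it is what guarantees that distinct codewords of $\C$ do not collapse to the same punctured word and thereby corrupt the weight bookkeeping, and it is exactly here that the hypothesis $d > 1$ is used.
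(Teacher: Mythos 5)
Your argument is the standard one, and the paper itself supplies no proof of this statement (it is quoted from Prange via Huffman--Pless), so there is no in-paper proof to diverge from: the injectivity of single-coordinate puncturing coming from $d>1$, the dichotomy $c_j=0$ versus $c_j\neq 0$, and the column count $i\,A_i(\C)/n$ for $M_i$ are exactly the intended ingredients. Note that the formal statement omits the word ``homogeneous,'' but the paper defines homogeneity immediately before the theorem and remarks afterwards that the result ``has very limited applicability'' precisely because homogeneity is hard to verify; you correctly read this hypothesis into your main argument, and without it the displayed formulas are false for a general coordinate.

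One sentence you should delete: the claim that ``the same count also follows by averaging over all $n$ coordinates, which avoids the per-column hypothesis.'' Averaging over $j$ shows only that $\frac{n-i}{n}A_i(\C)$ is the \emph{mean}, over all coordinates, of the number of weight-$i$ codewords vanishing at $j$; it does not show that every individual coordinate attains that mean, which is what the theorem asserts for an arbitrary fixed coordinate and which is exactly the content of the homogeneity hypothesis. For a non-homogeneous code the per-coordinate count genuinely varies and the stated formulas fail for some choices of $j$. Everything else in your write-up is correct.
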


It is known that $\C$ is homogeneous if $\C$ has a transitive automorphism group  \cite[p. 271]{HP03}.
This is about the only way to decide if a code is homogeneous.  Hence, Theorem  \ref{thm-Prange} has
very limited applicability.

Recently, better results regarding the weight distribution of a shortened code $\C_T$ of special linear codes
were developed in \cite{TDX192}. To introduce them, we need to introduce combinatorial $t$-designs.

Let $\mathcal P$ be a set of $n$ elements
and $\mathcal B$ a multiset of $b$ $k$-subsets of $\mathcal P$, where $n \ge 1$,
$b \ge 0$ and  $1 \le k \le n$.  Let $t$ be a positive integer satisfying $1 \le t \le n$.
The pair $\mathbb  D=(\mathcal P, \mathcal B)$ is called a  $t$-$(n, k, \lambda )$ \emph{design}, or simply \emph{$t$-design},
if  every $t$-subset of $\mathcal P$ is contained in exactly $\lambda$ elements of $\mathcal B$.  The elements of $\mathcal P$ are called \emph{points},
and those of $\mathcal B$ are referred to as \emph{blocks}.

When $\mathcal B=\emptyset$, i.e., $b=0$, we put $\lambda=0$ and call
$(\mathcal P, \emptyset)$ a $t$-$(n, k, 0)$ design for any $t$ and $k$
with $1 \leq t \leq n$ and $0 \leq k \leq n$. A $t$-$(n, k, \lambda)$
design with $t > k$ must have $\lambda=0$ and must be the design
$(\mathcal P, \emptyset)$. These designs are called trivial designs.
We will use the following conventions for the ease of description in the sequel.
 A $t$-$(n, k, \lambda)$ design $(\mathcal P, \mathcal B)$
is also said to be trivial if every $k$-subset of $\mathcal P$ is a block.

A $t$-design is called \emph{simple} if $\mathcal B$ does not contain repeated blocks.
 A $t$-$(n, k, \lambda)$ design is called a \emph{Steiner system} and denoted by
$S(t,k,n)$ if $t \ge 2$ and $\lambda=1$. The parameters of a $t$-$(n,k, \lambda )$ design satisfy:
\begin{align*}
\binom{n}{t} \lambda =\binom{k}{t} b.
\end{align*}

Let $\mathrm{GF}(q)$ denote the finite field with $q$ elements, where
$q$ is a prime power.  We assume that the reader is familiar with the basics of linear codes.
Let $\mathcal C$ be an $[n, k, d]$ linear code over $\mathrm{GF}(q)$. Let $A_i:=A_i(\mathcal C)$, which denotes the
number of codewords with Hamming weight $i$ in $\mathcal C$, where $0 \leq i \leq n$. The sequence
$(A_0, A_1, \cdots, A_{n})$ is
called the \textit{weight distribution} of $\mathcal C$, and $\sum_{i=0}^{n} A_i z^i$ is referred to as
the \textit{weight enumerator} of $\mathcal C$.
Then the $q$-ary linear code $\mathcal C$  may induce a $t$-design under certain conditions, which is formed by
the supports of  codewords of a fixed Hamming weight in $\mathcal C$.
 Let
$\mathcal P(\mathcal C)=\{0,1, \dots, n-1\}$ be the set of the coordinate positions of $\mathcal C$, where $n$ is the length of $\mathcal C$.
For a codeword $\mathbf c =(c_0, \dots, c_{n-1})$ in $\mathcal C$, the \emph{support} of  $\mathbf c$
is defined by
\begin{align*}
\mathrm{Supp}(\mathbf c) = \{i: c_i \neq 0, i \in \mathcal P(\mathcal C)\}.
\end{align*}
Let $\mathcal B_{w}(\mathcal C)
=\frac{1}{q-1}\{\{   \mathrm{Supp}(\mathbf c): wt(\mathbf{c})=w
~\text{and}~\mathbf{c}\in \mathcal{C}\}\}$, here and hereafter $\{\{\}\}$ is the multiset notation and
$\frac{1}{q-1}S$
denotes the multiset obtained after dividing the multiplicity of each element in the multiset $S$
by $q-1$.  For some special $\mathcal C$, $\left (\mathcal P(\mathcal C),  \mathcal B_{w}(\mathcal C) \right)$
is a $t$-$(n, w, \lambda)$ design with $b$ blocks, where
\begin{align}\label{eq:rl-Ac-d}
b=\frac{1}{q-1} A_{w}, \
\lambda= \frac{\binom{w}{t}}{ (q-1)\binom{n}{t} } A_{w}.
\end{align}
If  $\left (\mathcal P(\mathcal C),  \mathcal B_{w}(\mathcal C) \right)$ is a $t$-design for any $0\le w \le n$,
we say that the code $\mathcal C$ \emph{supports $t$-designs}. Notice that such a design
$\left (\mathcal P(\mathcal C),  \mathcal B_{w}(\mathcal C) \right)$ may have repeated
blocks or may be simple or trivial.

The following lemma provides a criterion for guaranteeing  a simple block set $\mathcal B_k(\mathcal C)$  \cite[Lemma 4.1]{DingBk2}.

\begin{lemma}\label{lem-simpled}
Let $\mathcal C$ be a linear code over $\mathrm{GF}(q)$ with length $n$ and minimum weight $d$. Let $w$ be the largest integer with $w \le n$ satisfying
\begin{align*}
w-\left \lfloor \frac{w+q-2}{q-1} \right  \rfloor <d.
\end{align*}
Then there are no repeated blocks in $\mathcal B_k(\mathcal C)$ for any $d\le k \le w$. Such a block set is
said to be simple.
\end{lemma}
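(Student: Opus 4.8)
The plan is to argue directly that whenever two weight-$k$ codewords share the same support with $d \le k \le w$, they must be scalar multiples of one another; since $\mathcal{B}_k(\mathcal{C})$ is formed by dividing the multiplicity of each support by $q-1$, the $q-1$ nonzero scalar multiples of a codeword collapse to a single block, and ruling out any further coincidence of supports is exactly the assertion that no block is repeated. So I would start by supposing $\mathbf{c}_1, \mathbf{c}_2 \in \mathcal{C}$ both have weight $k$ and the same support $S = \mathrm{Supp}(\mathbf{c}_1) = \mathrm{Supp}(\mathbf{c}_2)$ with $|S| = k$. At every coordinate $i \in S$ both codewords are nonzero, so the ratio $r_i = c_{1,i}/c_{2,i}$ is a well-defined element of $\mathrm{GF}(q)^\ast$.

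The combinatorial heart of the argument is a pigeonhole step. The $k$ ratios $\{r_i : i \in S\}$ all lie in the $(q-1)$-element set $\mathrm{GF}(q)^\ast$, so some value $\lambda$ is attained by at least $\lceil k/(q-1)\rceil$ of them. For that $\lambda$ the codeword $\mathbf{c}_1 - \lambda \mathbf{c}_2$ is supported inside $S$ (both summands are) and vanishes at every coordinate where $r_i = \lambda$, hence
\[
\mathrm{wt}(\mathbf{c}_1 - \lambda \mathbf{c}_2) \le k - \left\lceil \frac{k}{q-1} \right\rceil = k - \left\lfloor \frac{k+q-2}{q-1} \right\rfloor,
\]
using the standard identity $\lceil a/b \rceil = \lfloor (a+b-1)/b \rfloor$.

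Next I would connect this bound to the defining inequality for $w$. Set $f(k) = k - \lfloor (k+q-2)/(q-1)\rfloor$, the number of coordinates that survive the cancellation. Raising $k$ by one increases $k$ by one and the floor term by either zero or one, so $f$ is non-decreasing; hence $f(k) \le f(w) < d$ for every $k$ with $d \le k \le w$. (One also checks $f(d) \le d-1 < d$, so $w \ge d$ and the interval is nonempty.) Therefore $\mathbf{c}_1 - \lambda \mathbf{c}_2$ has weight strictly below $d$. Since $\mathcal{C}$ has minimum weight $d$, this forces $\mathbf{c}_1 - \lambda \mathbf{c}_2 = \mathbf{0}$, i.e. $\mathbf{c}_1 = \lambda \mathbf{c}_2$, so the two codewords are scalar multiples, which is precisely what the conclusion requires.

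The routine parts — the ceiling/floor identity and the monotonicity of $f$ — are easy, and I do not anticipate a serious obstacle, as the whole statement reduces to a clean counting argument once the ratio trick is in place. The one step deserving care is the pigeonhole estimate: one must verify that at least $\lceil k/(q-1)\rceil$ of the $k$ coordinates are genuinely zeroed and that $\mathbf{c}_1 - \lambda \mathbf{c}_2$ stays supported inside $S$, so that the weight bound $f(k)$ is valid. The only conceptual subtlety is noting that $f$ is monotone, which is what lets control at the single value $w$ propagate to the entire range $d \le k \le w$.
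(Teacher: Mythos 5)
Your proof is correct and follows what is essentially the standard argument for this lemma, which the paper itself only cites (from Lemma 4.1 of \emph{Designs from Linear Codes}) without reproducing a proof. The pigeonhole step on the ratios $r_i$, the identity $\lceil k/(q-1)\rceil=\lfloor (k+q-2)/(q-1)\rfloor$, the monotonicity of $f$ that propagates the bound from $w$ down to all $k$ in the range, and the final collapse of the $q-1$ scalar multiples into a single block are all handled correctly.
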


The following theorem gives a characterization of codes supporting $t$-designs via the weight distributions of their shortened and punctured codes \cite{TDX192}, and will be employed later in this paper.

\begin{theorem}\label{thm-tdesign-wtcode}
Let $\mathcal C$ be an $[n, m, d]$ linear code over $\mathrm{GF}(q)$ and  $d^{\perp}$  the minimum distance of $\mathcal  C^{\perp}$.
 Let  $t$ be a positive integer with  $0< t <\min \{d, d^{\perp}\}$. Then the following statements are equivalent.

 (1) $\left ( \mathcal P(\mathcal C) , \mathcal B_k(\mathcal C) \right )$ is a $t$-design for any $0 \le k \le n$.

 (2)  $\left ( \mathcal P(\mathcal C^{\perp}) , \mathcal B_k(\mathcal C^{\perp}) \right )$ is a $t$-design for any $0\le k \le n$.

 (3) For any $1 \le t' \le t$, the weight distribution $\left ( A_k(\mathcal C_T) \right )_{k=0}^{n -t'}$ of the shortened code $\mathcal C_T$
 is independent of the specific choice of the elements in $T$,
 where $T$ is  any  set of $t'$ coordinate positions in $\mathcal P\left ( \mathcal   C \right )$.

  (4) For any $1 \le t' \le t$, the weight distribution $\left ( A_k(\mathcal C^T) \right )_{k=0}^{n-t'}$ of the punctured code $\mathcal C^T$
  is independent of the specific choice of the elements in $T$,
 where $T$ is  any  set of $t'$ coordinate positions in $\mathcal P\left ( \mathcal   C \right )$.
\end{theorem}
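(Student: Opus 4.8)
The plan is to establish the four equivalences by running a single chain of implications (3) $\Rightarrow$ (1) $\Rightarrow$ (4) $\Rightarrow$ (2) $\Rightarrow$ (3), arranged so that only the easy combinatorial directions and MacWilliams duality are ever invoked. Everything rests on two weight-counting dictionaries. For the shortened code, restricting the puncturing map to the subcode of codewords vanishing on $T$ is a weight-preserving bijection onto $\mathcal{C}_T$, so for $k \ge 1$ one has $A_k(\mathcal{C}_T) = (q-1) N_k(T)$, where $N_k(T) = \#\{B \in \mathcal{B}_k(\mathcal{C}) : B \cap T = \emptyset\}$. For the punctured code, provided $t' = |T| < d$ the puncturing map on all of $\mathcal{C}$ is injective, so that for $j \ge 1$ one has $A_j(\mathcal{C}^T) = (q-1) \sum_{i \ge 0} e_{j+i,\,i}(T)$, where $e_{k,i}(T) = \#\{B \in \mathcal{B}_k(\mathcal{C}) : |B \cap T| = i\}$ and $e_{k,0} = N_k$; the zero codeword contributes the term $A_0 = 1$ to every distribution and is harmless.

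The combinatorial core handles the two arcs on the $\mathcal{C}$-side. Writing $\lambda_k(S) = \#\{B \in \mathcal{B}_k(\mathcal{C}) : S \subseteq B\}$ for the number of blocks through a subset $S$, inclusion-exclusion over the Boolean lattice of subsets of $T$ gives $N_k(T) = \sum_{S \subseteq T} (-1)^{|S|} \lambda_k(S)$, whose Möbius inverse is $\lambda_k(S) = \sum_{U \subseteq S} (-1)^{|U|} N_k(U)$. Hence, if $A_k(\mathcal{C}_T)$ (equivalently $N_k(T)$) is independent of the choice of $T$ for every size $t' \le t$, then $\lambda_k(S)$ depends only on $|S|$ for all $|S| \le t$; in particular every $t$-subset lies in the same number of weight-$k$ blocks, which is (1) for $\mathcal{C}$ (here $t < d$ ensures each nonempty $\mathcal{B}_k(\mathcal{C})$ has block size $k > t$). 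This is (3) $\Rightarrow$ (1). Reading the same identities forward, the $t$-design property makes every intersection number $e_{k,i}(T)$ constant over $t$-subsets, hence over $t'$-subsets with $t' \le t$; substituting into the second dictionary shows $A_j(\mathcal{C}^T)$ is independent of $T$, which is (1) $\Rightarrow$ (4).

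The remaining two arcs cross to $\mathcal{C}^\perp$ through MacWilliams duality. By Theorem \ref{thm-shortenedcode}(1) we have $(\mathcal{C}^T)^\perp = (\mathcal{C}^\perp)_T$ and $(\mathcal{C}_T)^\perp = (\mathcal{C}^\perp)^T$, and since $t' < d$ forces $\dim \mathcal{C}^T = \kappa$ for every $T$ of size $t'$ (Theorem \ref{thm-shortenedcode}(2)), the MacWilliams transform relating $\mathcal{C}^T$ to $(\mathcal{C}^\perp)_T$ is the same for all such $T$; therefore invariance of the weight distribution of $\mathcal{C}^T$ is equivalent to invariance of that of $(\mathcal{C}^\perp)_T$. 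Thus (4) yields that $A_\bullet((\mathcal{C}^\perp)_T)$ is $T$-independent, and applying the Möbius direction of the previous paragraph to $\mathcal{C}^\perp$ — legitimate because $t < d^\perp$ — produces (2). Symmetrically, (2) makes the intersection numbers of $\mathcal{C}^\perp$ constant, so $A_\bullet((\mathcal{C}^\perp)^T)$ is $T$-independent, and MacWilliams through $(\mathcal{C}_T)^\perp = (\mathcal{C}^\perp)^T$ returns (3). This closes the cycle; the hypothesis $t < \min\{d, d^\perp\}$ is consumed exactly as $t < d$ on the $\mathcal{C}$-side steps and as $t < d^\perp$ on the $\mathcal{C}^\perp$-side steps, in each case to validate both the relevant puncturing bijection and the Möbius inversion.

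I expect the genuine obstacle to be structural rather than computational. The tempting direct route would invert the punctured-code dictionary $A_j(\mathcal{C}^T) = (q-1)\sum_i e_{j+i,i}(T)$ to recover the design property, i.e.\ to prove (4) $\Rightarrow$ (1) head-on; but this linear system couples distinct weights $k = j + i$ and resists a clean inversion. The whole point of the cyclic ordering above is to sidestep that implication, using only the forward direction (design $\Rightarrow$ invariance) and the Möbius direction (shortening-invariance $\Rightarrow$ design), glued by the self-dual bookkeeping of MacWilliams. The remaining care is routine but essential: tracking the factor $q-1$ correctly, confirming that the two bijections truly hold throughout the ranges $t' < d$ and $t' < d^\perp$, and checking that the trivial term $A_0 = 1$ never interferes.
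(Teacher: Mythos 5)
Your proposal is correct, but there is nothing in the paper to compare it against: Theorem \ref{thm-tdesign-wtcode} is stated here without proof and attributed to \cite{TDX192}, so the ``paper's own proof'' is an external citation. Judged on its own terms, your argument is sound and essentially self-contained. The two counting dictionaries are right: shortening is a weight-preserving bijection from the subcode vanishing on $T$, giving $A_k(\C_T)=(q-1)N_k(T)$ unconditionally, and puncturing is injective on $\C$ when $t'<d$, giving $A_j(\C^T)=(q-1)\sum_i e_{j+i,i}(T)$. The inclusion--exclusion pair $N_k(T)=\sum_{S\subseteq T}(-1)^{|S|}\lambda_k(S)$ and its inverse $\lambda_k(S)=\sum_{U\subseteq S}(-1)^{|U|}N_k(U)$ both check out on the Boolean lattice, which makes (3) $\Rightarrow$ (1) clean; the standard fact that a $t$-design has intersection numbers $e_{k,i}(T)$ depending only on $|T|$ and $i$ for $|T|\le t$ gives (1) $\Rightarrow$ (4); and the two MacWilliams crossings are legitimate because the relevant dimensions (equivalently, cardinalities, which are already determined by the $T$-independent weight distributions) do not vary with $T$ under the hypotheses $t<d$ and $t<d^\perp$ via Theorem \ref{thm-shortenedcode}. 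Your cyclic ordering (3) $\Rightarrow$ (1) $\Rightarrow$ (4) $\Rightarrow$ (2) $\Rightarrow$ (3) is a genuinely good structural choice: it avoids ever having to invert the coupled system relating $A_j(\C^T)$ to blocks of several weights, which is the one direction that does not invert cleanly. Only cosmetic issues remain: you write $\kappa$ for the dimension that this theorem calls $m$, and the degenerate block set $\cB_0$ under the paper's $\frac{1}{q-1}$ convention deserves a one-line disclaimer, but neither affects the argument.
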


Recall that the binomial coefficient $\binom{a}{b}$ equals $0$ when $a<b$ or $b<0$. We have the following useful
result \cite{TDX192}.

\begin{theorem}\label{thm-sctcode}
Let $\mathcal C$ be an $[n, m, d]$ linear code  over $\mathrm{GF}(q)$ and  $d^{\perp}$  the minimum distance of $\mathcal  C^{\perp}$.
 Let  $t$ be a positive integer with  $0< t <\min \{d, d^{\perp}\}$.
Let $T$ be  a  set of $t$ coordinate positions in $\mathcal P\left ( \mathcal   C \right )$.
Suppose that $\left ( \mathcal P(\mathcal C) , \mathcal B_i(\mathcal C) \right )$ is a $t$-design for any $i$ with $d \le i \le n-t$.
Then the shortened code $\mathcal C_T$ is a linear code of length $n-t$ and dimension $m-t$. The weight distribution
$\left ( A_k(\mathcal C_T) \right )_{k=0}^{n-t}$ of $\mathcal C_T$ is independent of the specific choice of the elements
in $T$. Specifically,
$$A_k(\mathcal C_T) =\frac{ \binom{k}{t} \binom{n-t}{k}}{ \binom{n }{t} \binom{n-t}{k-t}}A_k(\mathcal C).$$
\end{theorem}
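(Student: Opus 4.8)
The plan is to combine the dimension count from Theorem~\ref{thm-shortenedcode} with the weight-distribution formula, deriving the latter by a double-counting argument that exploits the $t$-design hypothesis. First I would settle the structural claims. Since $t < \min\{d, d^\perp\} \le d$, part~(2) of Theorem~\ref{thm-shortenedcode} applies to give that $(\mathcal C^\perp)^T$ has dimension $n-t-(n-m)=m$, and then part~(1), which asserts $(\mathcal C^\perp)^T = (\mathcal C_T)^\perp$, forces $\dim \mathcal C_T = (n-t) - m$. Wait, that is not right: $\mathcal C_T$ has length $n-t$, so its dual has dimension $(n-t)-\dim\mathcal C_T$, and setting this equal to $\dim (\mathcal C^\perp)^T = n-t-\kappa(\mathcal C^\perp)$ with $\kappa(\mathcal C^\perp)=n-m$ yields $\dim \mathcal C_T = m-t$, as claimed. (One must use $t < d^\perp$ to invoke part~(2) on $\mathcal C^\perp$, whose minimum distance is $d^\perp$.) That the length is $n-t$ is immediate from the definition of shortening.

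The heart of the argument is the weight-distribution formula, and the plan is to prove it by counting pairs. Fix a weight $k$ with $d \le k \le n-t$. I would count the size of the set
\[
\Sigma = \{ (\mathbf c, S) : \mathbf c \in \mathcal C,\ \wt(\mathbf c)=k,\ S \subseteq \support(\mathbf c)^{\mathrm c},\ |S|=t\}
\]
in two ways, where $\support(\mathbf c)^{\mathrm c}$ denotes the complement of the support, i.e.\ the set of zero-coordinates of $\mathbf c$. Counting by first choosing $\mathbf c$: each weight-$k$ codeword has $n-k$ zero coordinates, so it contributes $\binom{n-k}{t}$ pairs, giving $|\Sigma| = \binom{n-k}{t} A_k(\mathcal C)$. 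Counting by first choosing $S$: a pair $(\mathbf c, S)$ with a fixed coordinate set $T=S$ means exactly that $\mathbf c \in \mathcal C(T)$ has full weight $k$ on the remaining $n-t$ coordinates; puncturing on $T$ turns such $\mathbf c$ into a weight-$k$ codeword of $\mathcal C_T$, and this correspondence is a bijection between weight-$k$ codewords of $\mathcal C$ vanishing on $T$ and weight-$k$ codewords of $\mathcal C_T$. Hence each fixed $S$ contributes $A_k(\mathcal C_T)$ pairs. Here is where the $t$-design hypothesis enters: by Theorem~\ref{thm-tdesign-wtcode} (the equivalence of (1) and (3), available since $t<\min\{d,d^\perp\}$), the value $A_k(\mathcal C_T)$ does not depend on which $t$-set $T$ is chosen, so summing over all $\binom{n}{t}$ choices of $S$ gives $|\Sigma| = \binom{n}{t} A_k(\mathcal C_T)$. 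Equating the two counts yields
\[
A_k(\mathcal C_T) = \frac{\binom{n-k}{t}}{\binom{n}{t}} A_k(\mathcal C),
\]
and it remains only to verify the elementary identity $\binom{n-k}{t}\big/\binom{n}{t} = \binom{k}{t}\binom{n-t}{k}\big/\left(\binom{n}{t}\binom{n-t}{k-t}\right)$ by cancelling factorials, which recovers the stated form.

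I expect the main obstacle to be the bookkeeping around the second counting, specifically making rigorous the claim that $A_k(\mathcal C_T)$ is the same for every $T$. Naively one only knows, for a fixed $T$, that puncturing $\mathcal C(T)$ gives the weight distribution of $\mathcal C_T$; that the $t$-design property guarantees this count is $T$-independent is precisely the content imported from Theorem~\ref{thm-tdesign-wtcode}, so the real work is checking that the $t$-design hypothesis stated here (namely that $(\mathcal P(\mathcal C), \mathcal B_i(\mathcal C))$ is a $t$-design for every $d \le i \le n-t$) is strong enough to trigger that theorem's condition~(1). A minor subtlety is the weight range: condition~(1) nominally requires a $t$-design for \emph{all} $0 \le k \le n$, but weights below $d$ are vacuous ($A_i=0$) and weights above $n-t$ can be handled by the complementary/MacWilliams relationship or simply verified to contribute nothing to the shortened code, so the hypothesis over $d \le i \le n-t$ suffices. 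One should also double-check the edge behaviour of the binomial-coefficient convention (that $\binom{a}{b}=0$ for $a<b$) to confirm the formula gives $A_k(\mathcal C_T)=0$ correctly when $k<t$, consistent with $\mathcal C_T$ having minimum distance at least $d-t$ or so.
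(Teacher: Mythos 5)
The paper never proves Theorem~\ref{thm-sctcode}: it is quoted from \cite{TDX192} as a known result, so there is no internal proof to compare against. On its own merits, your double-counting argument is the standard one and is correct in substance: counting pairs $(\bc,S)$ with $\wt(\bc)=k$ and $S$ a $t$-subset of the zero coordinates of $\bc$ gives $\binom{n-k}{t}A_k(\C)=\binom{n}{t}A_k(\C_T)$ once $A_k(\C_T)$ is known to be $T$-independent, and the identity $\binom{n-k}{t}=\binom{k}{t}\binom{n-t}{k}\big/\binom{n-t}{k-t}$ (valid for $t\le k\le n-t$; for other $k$ both sides of the weight formula vanish or reduce to the trivial $k=0$ case) recovers the stated form. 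The dimension count also lands in the right place, but your intermediate assertion that $\dim(\C^\perp)^T=n-t-\kappa(\C^\perp)$ is not what Theorem~\ref{thm-shortenedcode} says: puncturing $\C^\perp$ on $t<d^\perp$ coordinates preserves its dimension $n-m$, and it is the shortened code $\C_T=((\C^\perp)^\perp)_T$ that part (2), applied to $\C^\perp$, gives dimension $n-t-(n-m)=m-t$.

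The one step that genuinely needs repair is the $T$-independence of $A_k(\C_T)$. You import it from Theorem~\ref{thm-tdesign-wtcode} via the implication $(1)\Rightarrow(3)$, but condition (1) there requires $(\cP(\C),\cB_k(\C))$ to be a $t$-design for \emph{every} $0\le k\le n$, whereas the present hypothesis only covers $d\le k\le n-t$. Your remark that weights above $n-t$ ``contribute nothing to the shortened code'' correctly explains why those weights are irrelevant to the conclusion, but it does not verify condition (1), so the black-box citation is not licensed as written. The clean repair drops the citation: for each fixed $k$ with $d\le k\le n-t$, the number of weight-$k$ codewords of $\C$ vanishing on $T$ equals $q-1$ times the number of blocks (counted with multiplicity) of the $t$-design $(\cP(\C),\cB_k(\C))$ disjoint from $T$, and in any $t$-design the number of blocks disjoint from a fixed $t$-set is a constant, by inclusion--exclusion from the constancy of the intersection numbers $\lambda_s$ for $0\le s\le t$. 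For $0<k<d$ and for $k>n-t$ this count is $0$ for every $T$. With that substitution your argument is complete and uses exactly the stated hypothesis.
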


Theorem \ref{thm-sctcode} is useful, and will be employed to determine the weight distributions of some shortened codes of
several families of linear codes later. The following theorem will also be used later  \cite{TDX192}.

\begin{theorem}\label{thm-pctcode}
Let $\mathcal C$ be an $[n, m, d]$ linear code over $\mathrm{GF}(q)$ and  $d^{\perp}$  the minimum distance of $\mathcal  C^{\perp}$.
 Let  $t$ be a positive integer with  $0< t <d^{\perp}$.
Let $T$ be  a  set of $t$ coordinate positions in $\mathcal P\left ( \mathcal   C \right )$.
Suppose that $\left ( \mathcal P(\mathcal C) , \mathcal B_i(\mathcal C) \right )$ is a $t$-design for any $i$ with $d \le i \le n$.
Then the punctured code $\mathcal C^T$ is a linear code of length $n-t$ and dimension $m$. The weight distribution
$\left ( A_k(\mathcal C^T) \right )_{k=0}^{n-t}$ of $\mathcal C^T$ is independent of the specific choice of the elements
in $T$. Specifically,
$$A_k(\mathcal C^T) =\sum_{i=0}^t  \frac{\binom{n-t}{k} \binom{k+i}{t} \binom{t}{i} }{\binom{n-t}{k-t+i} \binom{n}{t}} A_{k+i}(\mathcal C).$$
\end{theorem}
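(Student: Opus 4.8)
The plan is to derive the length, dimension, and the entire weight distribution of $\mathcal C^T$ from a single counting argument, reading off each codeword of $\mathcal C^T$ as the image of a unique codeword of $\mathcal C$ under deletion of the coordinates in $T$. The length is $n-t$ by definition. For the dimension, consider the linear puncturing map $\varphi\colon \mathcal C\to\mathcal C^T$ deleting the positions in $T$; its kernel is the set of codewords of $\mathcal C$ whose support is contained in $T$, and every nonzero such codeword has weight at most $t$. Since $\mathcal B_d(\mathcal C)$ is a nonempty $t$-design its blocks have size $d\ge t$, so no nonzero codeword of $\mathcal C$ can be supported on the $t$-set $T$ (in the operative range $t<d$ this is immediate), whence $\varphi$ is injective and $\dim\mathcal C^T=m$. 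In particular $\varphi$ is a bijection, and for $\mathbf c\in\mathcal C$ we have $\mathrm{wt}(\varphi(\mathbf c))=\mathrm{wt}(\mathbf c)-|\mathrm{Supp}(\mathbf c)\cap T|$.

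Next I would set up the count. Writing $N_{w,i}$ for the number of codewords $\mathbf c\in\mathcal C$ with $\mathrm{wt}(\mathbf c)=w$ and $|\mathrm{Supp}(\mathbf c)\cap T|=i$, the weight identity above gives
$$A_k(\mathcal C^T)=\sum_{i=0}^{t}N_{k+i,\,i}.$$
The $t$-design hypothesis is exactly what makes each $N_{w,i}$ computable and independent of the choice of $T$. The key intermediate quantity is $M_{w,j}$, the number of weight-$w$ codewords whose support contains a fixed $j$-subset $S\subseteq T$ with $j\le t$. Double counting the pairs consisting of a weight-$w$ codeword together with a $j$-subset of its support, and using that $\mathcal B_w(\mathcal C)$ is a $t$-design so that every $j$-subset lies in the same number of blocks for $j\le t$, yields
$$M_{w,j}=\frac{\binom{w}{j}}{\binom{n}{j}}A_w,$$
which depends only on $j$ and not on $S$. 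An inclusion--exclusion over the subsets of $T$ then converts the ``support contains $S$'' counts into the ``meets $T$ in exactly $i$ points'' counts:
$$N_{w,i}=\sum_{j=i}^{t}(-1)^{j-i}\binom{j}{i}\binom{t}{j}M_{w,j}.$$

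Substituting $w=k+i$ and the formula for $M_{w,j}$ into the expression for $A_k(\mathcal C^T)$ expresses it as a linear combination of the $A_{k+i}(\mathcal C)$, $0\le i\le t$, with purely binomial coefficients; since those coefficients involve only the sizes $n$, $t$, $k$, $i$, independence of the weight distribution from the specific $T$ is manifest. The remaining and only real obstacle is to verify the binomial identity
$$\sum_{j=i}^{t}(-1)^{j-i}\binom{j}{i}\binom{t}{j}\frac{\binom{k+i}{j}}{\binom{n}{j}}=\frac{\binom{n-t}{k}\binom{k+i}{t}\binom{t}{i}}{\binom{n-t}{k-t+i}\binom{n}{t}}$$
for each $i$, which matches the coefficient of $A_{k+i}(\mathcal C)$ with the one claimed in the statement; this is a routine but slightly delicate manipulation, best handled by clearing denominators, rewriting the alternating sum through the absorption identity $\binom{j}{i}\binom{t}{j}=\binom{t}{i}\binom{t-i}{j-i}$, and recognising the result as a single Vandermonde-type evaluation. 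As an independent check on the dimension and on the $T$-independence, one may instead invoke Theorem \ref{thm-shortenedcode}(1) to write $(\mathcal C^T)^\perp=(\mathcal C^\perp)_T$ and apply Theorem \ref{thm-sctcode} to $\mathcal C^\perp$, whose weight supports again form $t$-designs by Theorem \ref{thm-tdesign-wtcode}; this route delivers $\dim\mathcal C^T=m$ and the constancy of the weight distribution at once, but not the explicit closed form, which is why the direct count above is the more informative approach.
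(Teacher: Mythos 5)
First, a remark on the comparison itself: the paper gives no proof of Theorem \ref{thm-pctcode} --- it is imported from \cite{TDX192} --- so there is no in-paper argument to set yours against. On its own terms, your counting core is sound and is the natural route to the formula: the decomposition $A_k(\mathcal C^T)=\sum_{i=0}^t N_{k+i,i}$, the evaluation $M_{w,j}=\binom{w}{j}A_w/\binom{n}{j}$ from the fact that a $t$-design is a $j$-design for every $j\le t$, the inclusion--exclusion giving $N_{w,i}$, and the final coefficient of $A_{k+i}$, which (via the subset-of-subset identity $\binom{n}{w}\binom{w}{t}=\binom{n}{t}\binom{n-t}{w-t}$ with $w=k+i$) is exactly the hypergeometric expression $N_{w,i}=\binom{t}{i}\binom{n-t}{w-i}A_w/\binom{n}{w}$ and agrees with the stated formula; at $t=1$ it reproduces Prange's formula in Theorem \ref{thm-Prange}, and the $t=2$ case checks out by hand.

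The genuine gap is the injectivity of the puncturing map $\varphi$, which you need twice: once for $\dim\mathcal C^T=m$, and once, more seriously, for $A_k(\mathcal C^T)=\sum_i N_{k+i,i}$ --- without injectivity the right-hand side counts codewords of $\mathcal C$ with multiplicity rather than codewords of $\mathcal C^T$. Your justification quietly substitutes $t<d$ for the actual hypothesis $t<d^\perp$ (``in the operative range $t<d$ this is immediate''), but $t<d$ is not assumed. The paper's design conventions do force $d\ge t$ (a nonempty $t$-design must have block size at least $t$), but the boundary case $d=t$ survives, and there $\varphi$ can genuinely fail to be injective: take $\mathcal C$ the $[n,n-1,2]$ binary even-weight code and $t=2$. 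Then $d^\perp=n>t$ and every $\mathcal B_i(\mathcal C)$ is a $2$-design, yet $\mathcal C^T=\gf(2)^{n-2}$ has dimension $n-2\neq n-1=m$, and for $n=4$ the stated formula returns $A_2(\mathcal C^T)=2$ while the true value is $1$. So the missing step cannot be filled from the stated hypotheses: you must either add $t<d$ (equivalently $t<\min\{d,d^\perp\}$, as in Theorem \ref{thm-sctcode}) or explicitly flag that the statement requires it. Note that your fallback route via $(\mathcal C^T)^\perp=(\mathcal C^\perp)_T$ and Theorem \ref{thm-sctcode} applied to $\mathcal C^\perp$ hits exactly the same wall, since that theorem demands $t<\min\{d^\perp,d\}$ as well.
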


\subsection{Some new general results}

In this section, we prove two general results. The first one is the following.

\begin{theorem}\label{thm-maingen1}
Every linear code $\C$ over $\gf(q)$ with minimum distance $d \geq 2$ is a shortened code of $\C(m,q,\alpha)^\perp$
for some $m$, $q$ and $\alpha$, where $\alpha$ is a generator of $\gf(q)^*$ and $\C(m,q,\alpha)$ was defined in
(\ref{eqn-irredcycliccode}).
\end{theorem}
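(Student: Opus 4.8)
The plan is to dualize the problem and reduce it to the known puncturing statement in Theorem \ref{thm-motivtheorem}. The crucial structural fact is that the shortening and puncturing operations are interchanged by the dual operation, via the identity $(G^\perp)_T = (G^T)^\perp$ recorded in Theorem \ref{thm-shortenedcode}(1). Taking $G = \C(m,q,\alpha)$, this identity reads $(\C(m,q,\alpha)^\perp)_T = (\C(m,q,\alpha)^T)^\perp$. Consequently, to exhibit $\C$ as a shortened code of $\C(m,q,\alpha)^\perp$, it suffices to produce an integer $m$, a suitable $\alpha$, and a coordinate set $T$ for which $\C(m,q,\alpha)^T = \C^\perp$; once this is arranged, the right-hand side of the identity becomes $(\C^\perp)^\perp = \C$, so the left-hand side $(\C(m,q,\alpha)^\perp)_T$ equals $\C$, which is exactly the desired shortening.

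The first step is therefore to verify that $\C^\perp$ meets the hypothesis of Theorem \ref{thm-motivtheorem}. That hypothesis requires the dual distance of $\C^\perp$—that is, the minimum distance of the code $(\C^\perp)^\perp = \C$—to be at least $2$. This is precisely the standing assumption $d \geq 2$ of the present theorem. Hence Theorem \ref{thm-motivtheorem} applies to $\C^\perp$ and furnishes an integer $m$, a primitive element $\alpha$ of $\gf(q^m)$, and a set $T$ of coordinate positions such that $\C^\perp = \C(m,q,\alpha)^T$.

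It then remains only to assemble the pieces. Substituting the relation $\C^\perp = \C(m,q,\alpha)^T$ into the duality identity of the first paragraph yields $(\C(m,q,\alpha)^\perp)_T = (\C(m,q,\alpha)^T)^\perp = (\C^\perp)^\perp = \C$, which is the assertion of the theorem. The same $m$, $\alpha$, and $T$ produced by Theorem \ref{thm-motivtheorem} thus serve directly as the data realizing $\C$ as a shortened code of $\C(m,q,\alpha)^\perp$.

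I do not anticipate a genuine obstacle here: the entire content is duality bookkeeping, and the argument is essentially Theorem \ref{thm-motivtheorem} read through the dual. The only points demanding care are to invoke the identity $(G^\perp)_T = (G^T)^\perp$ in the correct direction—applied with $G = \C(m,q,\alpha)$ rather than with $G = \C$—and to recognize correctly that the dual distance of $\C^\perp$ is the minimum distance $d$ of $\C$, so that the condition $d \geq 2$ is what licenses the application of Theorem \ref{thm-motivtheorem}. With these alignments in place, the conclusion is immediate.
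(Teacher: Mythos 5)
Your proof is correct and follows essentially the same route as the paper: apply Theorem \ref{thm-motivtheorem} to $\C^\perp$ (whose dual distance is $d\geq 2$) to write $\C^\perp=\C(m,q,\alpha)^T$, then use the duality identity of Theorem \ref{thm-shortenedcode}(1) to convert this into $\C=(\C(m,q,\alpha)^\perp)_T$. The only cosmetic difference is that you invoke the identity in the form $(G^\perp)_T=(G^T)^\perp$ with $G=\C(m,q,\alpha)$, while the paper uses the equivalent form $(G^\perp)^T=(G_T)^\perp$ with $G=\C(m,q,\alpha)^\perp$.
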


\begin{proof}
By assumption $d \geq 2$. It follows from Theorem \ref{thm-motivtheorem} that there are $m$, $q$ and a generator of
$\gf(q)^*$ such that
$$
\C^\perp = \C(m, q, \alpha)^T,
$$
where $T$ is a set of coordinates in $\C(m, q, \alpha)$. It then follows from Theorem \ref{thm-shortenedcode} that
$$
\C^\perp = \C(m, q, \alpha)^T= ( (\C(m, q, \alpha)^\perp )^\perp )^T = ( (\C(m, q, \alpha)^\perp )_T )^\perp.
$$
Hence, $\C=(\C(m, q, \alpha)^\perp )_T$. This completes the proof.
\end{proof}

The following is a corollary of Theorem \ref{thm-motivtheorem}.

\begin{corollary}\label{cor-march29}
Every linear code over $\gf(q)$ with dual distance at least $3$ is a punctured code of a Simplex code
over $\gf(q)$.
\end{corollary}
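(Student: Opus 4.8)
The plan is to deduce this directly from Theorem~\ref{thm-motivtheorem} together with the observation that a Simplex code over $\gf(q)$ is itself a punctured code of $\C(m,q,\alpha)$. Since dual distance at least $3$ certainly implies dual distance at least $2$, Theorem~\ref{thm-motivtheorem} applies and yields an integer $m$, a primitive element $\alpha\in\gf(q^m)$, and a coordinate set $T$ with $\C=\C(m,q,\alpha)^T$. So the only remaining work is to pass from the ambient irreducible cyclic code $\C(m,q,\alpha)$ down to a Simplex code without losing the coordinates that make up $\C$.

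First I would record the structure of the generator matrix of $\C(m,q,\alpha)$. After fixing a basis of $\gf(q^m)$ over $\gf(q)$, the $i$-th coordinate of $\C(m,q,\alpha)$ corresponds to the element $\alpha^i\in\gf(q^m)^*$, and the column of the generator matrix attached to $\lambda\alpha^i$ is $\lambda$ times the column attached to $\alpha^i$ for every $\lambda\in\gf(q)^*$. Thus the $q^m-1$ coordinates split into $\frac{q^m-1}{q-1}$ cosets of $\gf(q)^*$, one per projective point of $\PG(m-1,q)$, and retaining exactly one coordinate from each coset produces a code whose generator matrix has one representative from each $1$-dimensional subspace of $\gf(q)^m$: this is precisely a Simplex code $S$ over $\gf(q)$. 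Hence $S=\C(m,q,\alpha)^{T_S}$ for a suitable puncturing set $T_S$.

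The key step is to exploit the hypothesis $d^\perp\geq 3$. The dual distance equals the least number of linearly dependent columns of the generator matrix of $\C$, so $d^\perp\geq 3$ forces these columns to be nonzero and pairwise linearly independent; equivalently, the coordinates of $\C(m,q,\alpha)$ that survive the puncturing (those outside $T$) hit each coset of $\gf(q)^*$ at most once. This is exactly the compatibility condition I need: I would choose the representatives defining $S$ so that every surviving coordinate of $\C$ is selected as the representative of its coset. With this choice the kept positions of $\C$ form a subset of the kept positions of $S$, and since puncturing is nothing but coordinate deletion and therefore composes, I obtain $\C=S^{T'}$ for the complementary set $T'$ inside $S$.

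The main obstacle I anticipate is the bookkeeping in that last paragraph: a naive reading would only give that $\C$ is monomially equivalent to a punctured Simplex code, since the surviving columns of $\C$ need not be the canonical coset representatives but only scalar multiples of them. The point to get right is that a Simplex code is defined only up to the choice of one representative per projective point, so I am free to build $S$ using exactly the representatives dictated by $\C$; once this choice is made, $\C$ is literally, and not merely up to equivalence, a punctured code of $S$, which completes the argument.
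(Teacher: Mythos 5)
Your argument is correct and is exactly the derivation the paper intends: the paper states this result without a written proof, merely asserting that it is a corollary of Theorem~\ref{thm-motivtheorem}, and your proposal fills in precisely that route (apply Theorem~\ref{thm-motivtheorem}, observe that a Simplex code is the puncturing of $\C(m,q,\alpha)$ down to one coordinate per coset of $\gf(q)^*$, and use $d^\perp\geq 3$ to see that the surviving coordinates of $\C$ meet each coset at most once, so the representatives can be chosen to contain them). Your closing remark about choosing the coset representatives dictated by $\C$, so that the conclusion holds literally rather than only up to monomial equivalence, is a worthwhile detail that the paper leaves implicit.
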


Note that the dual of a Hamming code is called a Simplex code, which is a one-weight code.

\begin{theorem}\label{thm-maingen2}
Every linear code with minimum distance at least $3$ is a shortened code of a Hamming code over $\gf(q)$.
\end{theorem}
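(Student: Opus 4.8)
The plan is to mirror the proof of Theorem \ref{thm-maingen1}, replacing the irreducible cyclic code $\C(m,q,\alpha)$ by a Hamming code and using Corollary \ref{cor-march29} in place of Theorem \ref{thm-motivtheorem}. The central observation is that the hypothesis "$\C$ has minimum distance $d \geq 3$" is precisely what is needed to guarantee that the \emph{dual} code $\C^\perp$ has dual distance at least $3$, which is exactly the input required by Corollary \ref{cor-march29}. So the whole argument is a dualization of the punctured-Simplex-code characterization.

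First I would pass to $\C^\perp$. Since $(\C^\perp)^\perp = \C$ has minimum distance $d \geq 3$, the code $\C^\perp$ has dual distance at least $3$, and applying Corollary \ref{cor-march29} to $\C^\perp$ yields a Simplex code $S$ over $\gf(q)$ together with a coordinate set $T$ such that $\C^\perp = S^T$. Next I would use the remark recorded just before the statement, that a Simplex code is the dual of a Hamming code: write $S = H^\perp$ for a suitable Hamming code $H$, so that $\C^\perp = (H^\perp)^T$. Invoking the duality identity $(H^\perp)^T = (H_T)^\perp$ from Theorem \ref{thm-shortenedcode}(1) then converts the punctured Simplex code into the dual of a shortened Hamming code, giving $\C^\perp = (H_T)^\perp$. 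Taking duals once more and using $(\C^\perp)^\perp = \C$ produces $\C = H_T$, exhibiting $\C$ as a shortened code of $H$, as desired.

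Because every step is a direct substitution, I do not expect a genuine obstacle; the proof is essentially one chain of equalities. The one point that needs care is the distance bookkeeping: I must verify that the hypothesis of Corollary \ref{cor-march29} really holds, i.e.\ that the dual distance of $\C^\perp$ equals $d(\C) \geq 3$, rather than conflating it with the minimum distance of $\C^\perp$ itself. The only other thing to keep straight is the correct orientation of the identity $(\C^\perp)^T = (\C_T)^\perp$ in Theorem \ref{thm-shortenedcode}(1): here it must be applied to $H$ rather than to $\C$, so that puncturing on the Simplex side corresponds exactly to shortening on the Hamming side.
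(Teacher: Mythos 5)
Your proposal is correct and is essentially the paper's own argument: the paper omits the details, saying only that the proof follows from Corollary \ref{cor-march29} and Theorem \ref{thm-shortenedcode} in the same way as Theorem \ref{thm-maingen1}, and your chain $\C^\perp = S^T = (H^\perp)^T = (H_T)^\perp$ followed by dualizing is exactly that omitted argument. Your care about applying Corollary \ref{cor-march29} to $\C^\perp$ (whose dual distance is $d(\C)\geq 3$) is the right bookkeeping.
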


\begin{proof}
The desired conclusion follows from Corollary \ref{cor-march29} and Theorem \ref{thm-shortenedcode}.
The proof is similar to that of  Theorem \ref{thm-maingen1} and is omitted here.
\end{proof}

We now prove the following result.

\begin{theorem}\label{thm-april4101}
Let $\C$ be an $[n, k, d]$ code over $\gf(q)$, and let $d^\perp$ denote the minimum distance
of the dual code $\C^\perp$. Let $t$ be an integer with $1 \leq t < \min\{d, d^\perp\}$. For any set
$T=\{i_1, i_2, \ldots, i_t\}$ of $t$ coordinates,  $\C^T$ has length $n-t$, dimension $k$ and minimum
distance at least $d-t$.

Furthermore, if $A_d(\C) > q^k - q^{k-t}(q-1)^t-1$ and $t \leq k$, then the minimum distance of
$\C^T$ equals $d-t$.
\end{theorem}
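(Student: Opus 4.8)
The first part is purely structural. Since $t < \min\{d, d^\perp\}$, in particular $t < d^\perp$, so Theorem \ref{thm-shortenedcode}(2) (applied to $\C^\perp$ in place of $\C$, whose minimum distance is $d^\perp > t$) guarantees that puncturing in $t$ coordinates does not collapse the dimension: $\C^T$ has length $n-t$ and dimension $k$. For the minimum-distance lower bound, I would argue directly. Puncturing deletes at most $t$ coordinates from each codeword, so a nonzero codeword $\bc \in \C$ of weight $w \geq d$ maps to a codeword of $\C^T$ of weight at least $w - t \geq d - t$. The only worry is that a nonzero $\bc$ could map to the \emph{zero} codeword, i.e. that $\support(\bc) \subseteq T$; but that would force $w \leq t < d$, contradicting $w \geq d$. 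Hence every nonzero codeword of $\C^T$ comes from a nonzero codeword of $\C$ and has weight at least $d - t$, establishing $d(\C^T) \geq d - t$.

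For the second part, the goal is to show the bound $d - t$ is actually attained under the hypothesis $A_d(\C) > q^k - q^{k-t}(q-1)^t - 1$. The plan is to show that at least one minimum-weight codeword $\bc$ of $\C$ (weight exactly $d$) has its support meeting $T$ in as many positions as possible, ideally all of $T$, so that after puncturing its weight drops all the way to $d - t$. The natural counting idea is this: a codeword $\bc$ of weight $d$ survives puncturing with weight strictly greater than $d - t$ precisely when $\support(\bc)$ misses at least one coordinate of $T$, i.e. when $\bc$ is \emph{zero} in at least one of the $t$ positions of $T$. So I would count, or rather bound above, the number of weight-$d$ codewords that are zero in at least one coordinate of $T$, and show this count is strictly less than $A_d(\C)$; then some weight-$d$ codeword is nonzero on \emph{all} of $T$, and puncturing sends it to weight exactly $d - t$.

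The quantity $q^k - q^{k-t}(q-1)^t$ is exactly the number of codewords of $\C$ that are zero in at least one coordinate of $T$: the codewords nonzero on all $t$ coordinates of $T$ number at most $q^{k-t}(q-1)^t$ (here $t \leq k$ is used so the projection onto $T$ is surjective onto $\gf(q)^t$ with fibers of size $q^{k-t}$, by the condition $t < d^\perp$ which makes the $t$ coordinates of $T$ information-like for $\C$), so the complementary count of codewords vanishing somewhere on $T$ is at least $q^k - q^{k-t}(q-1)^t$. The key inclusion is that every weight-$d$ codeword that survives with weight exceeding $d-t$ is among these codewords vanishing somewhere on $T$, and we must also exclude the all-zero codeword (accounting for the $-1$). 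The hypothesis $A_d(\C) > q^k - q^{k-t}(q-1)^t - 1$ then forces a weight-$d$ codeword to lie outside this set, i.e. to be nonzero on all of $T$. The main obstacle is the combinatorial bookkeeping in the counting step: I must verify carefully that the set of codewords nonzero on all of $T$ has size \emph{exactly} $q^{k-t}(q-1)^t$ (not merely at most), which requires that the restriction map $\C \to \gf(q)^t$ onto the coordinates in $T$ be surjective. This surjectivity is where the condition $t < d^\perp$ enters crucially, since $d^\perp > t$ means no nonzero codeword of $\C^\perp$ is supported within $T$, equivalently the columns of a generator matrix indexed by $T$ are linearly independent, equivalently the projection $\C \to \gf(q)^T$ is onto.
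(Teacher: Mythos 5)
Your proof is correct and follows essentially the same route as the paper: both arguments count the codewords that are nonzero on all of $T$ (exactly $q^{k-t}(q-1)^t$, which the paper phrases as the orthogonal-array-of-strength-$t$ property and you derive from the linear independence of the $T$-columns of a generator matrix --- these are the same fact, both resting on $t<d^\perp$), subtract one for the zero codeword, and conclude by pigeonhole that some weight-$d$ codeword is nonzero on all of $T$ and hence punctures to weight exactly $d-t$. One cosmetic slip: for the dimension of $\C^T$ you should apply Theorem \ref{thm-shortenedcode}(2) directly to $\C$ using $t<d$ (applying it to $\C^\perp$ with $t<d^\perp$ gives information about $\C_T$, not $\C^T$), though the injectivity argument in your minimum-distance paragraph already covers this.
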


\begin{proof}
Since $t < d$, the punctured versions of any two distinct codewords in $\C$ are
distinct. It then follows that the dimension of $\C^T$ is $k$. Clearly,
the minimum distance of $\C^T$ is at least $d-t$.

Let $M$ be the $q^k \times n$ matrix whose rows are all codewords of $\C$. It is
well known that $M$ is an orthogonal array of strength $t$. Let $S_1(T)$ denote
the set of all codewords in $\C$ whose coordinates in $T$ are all nonzero, and
define $S_2(T)=\C \setminus S_1(T)$. By definition, $S_1(T)$ and $S_2(T)$ partition
$\C$. Since $M$ is an orthogonal array of strength $t$, $|S_1(T)|=q^{k-t}(q-1)^t$.
Consequently,
$
|S_2(T)|=q^k-q^{k-t}(q-1)^t.
$
Let $C_d$ denote the set of all codewords of weight $d$ in $\C$. Then $S_1(T) \cap C_d$
and  $S_2(T) \cap C_d$ partition $C_d$. Note that $d\geq 1$ and
$$
|S_2(T) \cap C_d| \leq |S_2(T)|-1 = q^k - q^{k-t}(q-1)^t-1.
$$
If $A_d(\C) > q^k - q^{k-t}(q-1)^t-1$ and $t \leq k$, then $|S_1(T) \cap C_d|\geq 1$.
This means that
there is at least one codeword
in $\C$ whose coordinates in $T$ are all nonzero. As a result, the punctured
version of this codeword has Hamming weight $d-t$. Hence, the minimum distance of
$\C^T$ equals $d-t$.
\end{proof}

The only new result in Theorem \ref{thm-april4101} is the last conclusion on the minimum distance
of the punctured code $\C^T$. The following theorem shows that the minimum distance of $(\C_T)^\perp$
can be determined in some cases.

\begin{theorem}\label{thm-april4102}
Let $\C$ be an $[n, k, d]$ code over $\gf(q)$, and let $d^\perp$ denote the minimum distance
of the dual code $\C^\perp$. Let $t$ be an integer with $1 \leq t < \min\{d, d^\perp\}$. For any set
$T=\{i_1, i_2, \ldots, i_t\}$ of $t$ coordinates,  $(\C_T)^\perp$ has length $n-t$, dimension $n-k$ and minimum
distance at least $d^\perp-t$.

Furthermore, if $A_{d^\perp}(\C^\perp) > q^{n-k} - q^{n-k-t}(q-1)^t-1$ and $t \leq n-k$, then the minimum distance of
$(\C_T)^\perp$ equals $d^\perp-t$.
\end{theorem}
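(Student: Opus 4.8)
The plan is to reduce this statement to Theorem~\ref{thm-april4101} by passing to the dual code. The crucial tool is part~(1) of Theorem~\ref{thm-shortenedcode}, which asserts the identity $(\C^\perp)^T = (\C_T)^\perp$. This lets me rewrite the object I wish to analyze, the dual of a shortened code, as the \emph{punctured} code $(\C^\perp)^T$ obtained by deleting the $t$ coordinates in $T$ from $\C^\perp$. Thus everything about $(\C_T)^\perp$ can be read off from the behavior of puncturing applied to $\C^\perp$, and Theorem~\ref{thm-april4101} is precisely a statement about punctured codes.

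First I would set up the parameter dictionary. The code $\C^\perp$ is an $[n, n-k, d^\perp]$ code, and its own dual is $(\C^\perp)^\perp = \C$, whose minimum distance is $d$. Hence, when Theorem~\ref{thm-april4101} is applied with $\C$ replaced by $\C^\perp$, the symbols ``$d$'' and ``$d^\perp$'' exchange roles: the minimum distance of the ambient code is now $d^\perp$ and the minimum distance of its dual is $d$. The hypothesis $1 \leq t < \min\{d, d^\perp\}$ is symmetric in $d$ and $d^\perp$, so it transfers verbatim. Theorem~\ref{thm-april4101} then yields that $(\C^\perp)^T$ has length $n-t$, dimension $n-k$, and minimum distance at least $d^\perp - t$. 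Via the identity above, these are exactly the asserted properties of $(\C_T)^\perp$.

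For the furthermore clause, the equality condition in Theorem~\ref{thm-april4101}, namely $A_d(\C) > q^k - q^{k-t}(q-1)^t - 1$ together with $t \leq k$, becomes under the substitution $\C \mapsto \C^\perp$, $k \mapsto n-k$, $d \mapsto d^\perp$ the condition $A_{d^\perp}(\C^\perp) > q^{n-k} - q^{n-k-t}(q-1)^t - 1$ together with $t \leq n-k$, which is exactly the hypothesis stated. The conclusion that the minimum distance of $(\C^\perp)^T$ equals $d^\perp - t$ then translates, again by the identity, into the statement that $(\C_T)^\perp$ has minimum distance $d^\perp - t$.

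I do not anticipate any genuine obstacle: the combinatorial heart of the argument, the orthogonal-array counting that guarantees a weight-$d^\perp$ codeword of $\C^\perp$ nonzero on all of $T$, was already carried out in the proof of Theorem~\ref{thm-april4101}. The only point demanding care is the bookkeeping of the duality, namely verifying that $(\C^\perp)^T = (\C_T)^\perp$ is the correct instance of Theorem~\ref{thm-shortenedcode}(1) and that no parameter is transposed incorrectly when the dual code is fed into Theorem~\ref{thm-april4101}.
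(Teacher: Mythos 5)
Your proposal is correct and follows exactly the paper's own argument: the paper also invokes the identity $(\C_T)^\perp=(\C^\perp)^T$ from Theorem~\ref{thm-shortenedcode} and then applies Theorem~\ref{thm-april4101} to $\C^\perp$, merely leaving implicit the parameter bookkeeping that you spell out. No discrepancy to report.
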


\begin{proof}
By Theorem \ref{thm-shortenedcode}, we have $(\C_T)^\perp=(\C^\perp)^T$. The desired conclusions then follow
from Theorem \ref{thm-april4101}.
\end{proof}

Both Theorems \ref{thm-april4101} and \ref{thm-april4102} will be used to determine the parameters of some
shortened codes and their duals later.

\section{Some shortened codes of the Hamming codes}

A parity check matrix $H_{(q,m)}$ of the \emph{Hamming code\index{Hamming code}} $\cH_{(q,m)}$ over $\gf(q)$
 is defined by choosing for its columns a nonzero vector from
each one-dimensional subspace of $\gf(q)^m$. In terms of finite geometry, the columns of $H_{(q,m)}$ are the
points of the projective geometry $\PG(m-1, \gf(q))$. Hence  $\cH_{(q,m)}$ has length $n=(q^m-1)/(q-1)$ and dimension
$n-m$. Note that no two columns of $H_{(q,m)}$ are linearly dependent over $\gf(q)$. The minimum weight of $\cH_{(q,m)}$
is at least 3. Adding two nonzero vectors from two different one-dimensional subspaces gives a nonzero vector
from a third one-dimensional space. Therefore,  $\cH_{(q,m)}$ has minimum weight 3.
It is also well known that any $[(q^m-1)/(q-1), (q^m-1)/(q-1)-m, 3]$ code over $\gf(q)$ is monomially equivalent to the Hamming code $\cH_{(q,m)}$ \cite[Theorem 1.8.2]{HP03}. The weight distribution of  $\cH_{(q,m)}$  is given in the following lemma
\cite{DL17}.

\begin{lemma}\label{lem-HCwt}
The weight distribution of $\cH_{(q,m)}$ is given by
\begin{eqnarray*}
q^m A_{k}(\cH_{(q,m)})=
& \sum_{\substack {0 \le i \le \frac{q^{m-1}-1}{q-1} \\0 \le j \le q^{m-1} \\ i+j=k}}\left[\binom{\frac{q^{m-1}-1}{q-1}}{i}
 \binom{q^{m-1}}{j}\Big((q-1)^k+(-1)^j(q-1)^i(q^m-1)\Big)\right]
\end{eqnarray*}
for $0 \leq k \leq (q^m-1)/(q-1)$.
\end{lemma}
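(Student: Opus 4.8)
The plan is to obtain the weight distribution of $\cH_{(q,m)}$ by applying the MacWilliams identity to its dual, whose weight enumerator is completely explicit. First I would recall that the dual of the Hamming code $\cH_{(q,m)}$ is the Simplex code, which is a one-weight $[n, m, q^{m-1}]$ code with $n=(q^m-1)/(q-1)$, so that its weight enumerator is $1 + (q^m-1)z^{q^{m-1}}$ and it has exactly $q^m$ codewords. In homogeneous two-variable form this reads $W_{\cH^\perp}(x,y) = x^n + (q^m-1)\, x^{n-q^{m-1}} y^{q^{m-1}}$. Since $\cH_{(q,m)} = (\cH_{(q,m)}^\perp)^\perp$, the MacWilliams transform gives
$$
q^m\, W_{\cH_{(q,m)}}(x,y) = \big(x+(q-1)y\big)^n + (q^m-1)\big(x+(q-1)y\big)^{n-q^{m-1}}(x-y)^{q^{m-1}},
$$
and the coefficient of $x^{n-k}y^k$ on the left is exactly $q^m A_k(\cH_{(q,m)})$.

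The next step is a bookkeeping computation of that coefficient. The key arithmetic observation, which I would verify first, is that $n - q^{m-1} = (q^{m-1}-1)/(q-1)$; writing $a=(q^{m-1}-1)/(q-1)$ and $b=q^{m-1}$ we have $a+b=n$, and these are precisely the upper summation limits on $i$ and $j$ appearing in the statement. Expanding $(x+(q-1)y)^a = \sum_i \binom{a}{i}(q-1)^i x^{a-i}y^i$ and $(x-y)^b = \sum_j \binom{b}{j}(-1)^j x^{b-j}y^j$ by the binomial theorem and collecting the monomials with $i+j=k$, the coefficient of $x^{n-k}y^k$ in the second term becomes
$$
(q^m-1)\sum_{\substack{0\le i\le a,\ 0\le j\le b\\ i+j=k}} \binom{a}{i}\binom{b}{j}(-1)^j(q-1)^i,
$$
while the first term $(x+(q-1)y)^n$ contributes $\binom{n}{k}(q-1)^k$ to the coefficient of $x^{n-k}y^k$.

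Finally, to match the stated closed form I would rewrite the contribution $\binom{n}{k}(q-1)^k$ using the Vandermonde convolution $\sum_{i+j=k}\binom{a}{i}\binom{b}{j}=\binom{a+b}{k}=\binom{n}{k}$, which turns it into $(q-1)^k \sum_{i+j=k}\binom{a}{i}\binom{b}{j}$. Adding this to the second contribution and factoring $\binom{a}{i}\binom{b}{j}$ out of each summand yields precisely $\sum \binom{a}{i}\binom{b}{j}\big((q-1)^k + (-1)^j(q-1)^i(q^m-1)\big)$, which is the claimed identity. I do not expect a genuine obstacle here, since the Simplex weight enumerator is known and the computation is self-contained; the only points demanding care are the direction and normalization of the MacWilliams transform (dividing by $|\cH^\perp|=q^m$, not by $|\cH|$), the identity $n-q^{m-1}=(q^{m-1}-1)/(q-1)$ that pins down the summation range, and the Vandermonde step that folds the binomial term $\binom{n}{k}(q-1)^k$ into the double sum so the two forms agree.
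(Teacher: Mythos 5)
Your proposal is correct: the identity $n-q^{m-1}=(q^{m-1}-1)/(q-1)$, the normalization by $|\cH_{(q,m)}^\perp|=q^m$, and the Vandermonde step folding $\binom{n}{k}(q-1)^k$ into the double sum all check out, and the result matches the stated formula exactly. The paper itself gives no proof of this lemma (it is quoted from the reference \cite{DL17}), but your MacWilliams-transform derivation from the one-weight Simplex enumerator is precisely the standard argument behind that citation, so there is nothing to add.
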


The duals of the Hamming codes $\cH_{(q,m)}$ are called \emph{Simplex codes\index{Simplex code}}, which have
parameters $[(q^m-1)/(q-1), m, q^{m-1}]$.
The nonzero codewords of the $[(q^m-1)/(q-1), m, q^{m-1}]$ Simplex codes all have weight $q^{m-1}$.

\begin{theorem}\label{thm-Hamshort1}
Let $n=(q^m-1)/(q-1) \geq 4$, and let $t_1$ be any coordinator of codewords in $\cH_{(q,m)}$. Then the following hold:
\begin{itemize}
\item $(\cH_{(q,m)})_{\{t_1\}}$
is an $[n-1, n-m-1, 3]$ code over $\gf(q)$ with
\begin{eqnarray*}
A_{k}((\cH_{(q,m)})_{\{t_1\}}) = \frac{n-k}{n} A_k(\cH_{(q,m)})
\end{eqnarray*}
for $0 \leq k \leq n-1$, where $A_k(\cH_{(q,m)})$ was given in Lemma \ref{lem-HCwt}.
\item $(\cH_{(q,m)}^\perp)_{\{t_1\}}$
is an $[n-1, m-1, q^{m-1}]$ code over $\gf(q)$ with weight enumerator
$1+(q^{m-1}-1)z^{q^{m-1}}$.
\item $((\cH_{(q,m)})_{\{t_1\}})^\perp$ is an $[n-1, m, q^{m-1}-1]$ code over $\gf(q)$ with weight enumerator
$$
1+(q-1)q^{m-1} z^{q^{m-1}-1} +(q^{m-1}-1)z^{q^{m-1}}.
$$
\item $((\cH_{(q,m)}^\perp)_{\{t_1\}})^\perp$ is an $[n-1, n-m, 2]$ code over $\gf(q)$ with weight enumerator
\begin{eqnarray}\label{eqn-april401}
\frac{1}{q^{m-1}} [(1+(q-1)z)^{n-1} +(q^{m-1}-1)(1-z)^{q^{m-1}} ( 1+(q-1)z)^{n-1-q^{m-1}} ].
\end{eqnarray}
\end{itemize}
\end{theorem}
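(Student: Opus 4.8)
The plan is to reduce the four claims to direct computations on just two codes, the Simplex code and the Hamming code, using the duality identity $(\C_T)^\perp=(\C^\perp)^T$ of Theorem~\ref{thm-shortenedcode}(1) to trade the two dual objects for punctured codes. Applying this identity with $\C=\cH_{(q,m)}$ shows that the third code $((\cH_{(q,m)})_{\{t_1\}})^\perp$ is the punctured Simplex code $(\cH_{(q,m)}^\perp)^{\{t_1\}}$, and applying it with $\C=\cH_{(q,m)}^\perp$ shows that the fourth code $((\cH_{(q,m)}^\perp)_{\{t_1\}})^\perp$ is the punctured Hamming code $(\cH_{(q,m)})^{\{t_1\}}$. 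Thus the four objects become, respectively, the shortened Hamming code, the shortened Simplex code, the punctured Simplex code, and the punctured Hamming code, and the whole theorem follows once these four are understood.

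First I would dispose of the Simplex code, which settles the second and third bullets. Since $\cH_{(q,m)}^\perp$ is a one-weight $[n,m,q^{m-1}]$ code with generator matrix $H_{(q,m)}$, a codeword $\bv H_{(q,m)}$ vanishes at the coordinate $t_1$ exactly when $\bv$ lies in the hyperplane orthogonal to the $t_1$-th column, and there are $q^{m-1}$ such $\bv$. For the shortened code this immediately gives dimension $m-1$, and since deleting a zero entry preserves weight, each of its $q^{m-1}-1$ nonzero codewords keeps weight $q^{m-1}$, yielding the enumerator $1+(q^{m-1}-1)z^{q^{m-1}}$ and minimum distance $q^{m-1}$. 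For the punctured code, the $q^{m-1}-1$ nonzero codewords vanishing at $t_1$ retain weight $q^{m-1}$, while the remaining $(q-1)q^{m-1}$ codewords lose a nonzero entry and drop to weight $q^{m-1}-1$; the dimension stays $m$ because $t=1<d=q^{m-1}$ by Theorem~\ref{thm-shortenedcode}(2). This produces the stated enumerator and minimum distance $q^{m-1}-1$.

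Next I would treat the shortened Hamming code (first bullet): its length is $n-1$, and its dimension is $n-m-1$ by Theorem~\ref{thm-shortenedcode}(2) applied to $\C^\perp=\cH_{(q,m)}^\perp$, using $t=1<d^\perp=q^{m-1}$. For the weight distribution I would invoke Prange's Theorem~\ref{thm-Prange}: the Hamming code carries a transitive permutation automorphism group (induced by the collineations of $\PG(m-1,\gf(q))$ permuting the points, i.e.\ the coordinates), hence is homogeneous, so the weight distribution is independent of $t_1$ and equals $A_k((\cH_{(q,m)})_{\{t_1\}})=\frac{n-k}{n}A_k(\cH_{(q,m)})$. The minimum distance is at least $d=3$ because shortening deletes only zero coordinates; it equals $3$ exactly because $A_3=\frac{n-3}{n}A_3(\cH_{(q,m)})>0$, which is precisely where the hypothesis $n\ge 4$ enters. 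Finally, for the punctured Hamming code (fourth bullet) the dimension is $n-m$ since $t=1<d=3$, and its weight enumerator is the MacWilliams transform of the shortened Simplex code of the second bullet, of which it is the dual; with length $n-1$ and dual size $q^{m-1}$ one gets
\[
\frac{1}{q^{m-1}}(1+(q-1)z)^{n-1}\Big(1+(q^{m-1}-1)\big(\tfrac{1-z}{1+(q-1)z}\big)^{q^{m-1}}\Big),
\]
which clears denominators to (\ref{eqn-april401}); reading off $A_1=0$ and $A_2>0$ gives minimum distance $2$.

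The main obstacle is not any single computation but the two spots where more than bookkeeping is required: justifying that Prange's formula applies, which rests on the homogeneity of $\cH_{(q,m)}$ through its transitive automorphism group, and confirming that the advertised minimum distances are genuinely attained—in particular that $n\ge 4$ is exactly the condition guaranteeing a surviving weight-$3$ codeword after shortening the Hamming code. Once the shortened Simplex enumerator is in hand, the MacWilliams step for the fourth bullet is purely mechanical.
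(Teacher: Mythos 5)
Your proof is correct, but it takes a genuinely different route from the paper's for the first three bullets. The paper funnels everything through the $t$-design machinery: it first shows that $(\cP(\cH_{(q,m)}),\cB_k(\cH_{(q,m)}))$ is a $2$-design for every $k$ (via the known $2$-design property of the Simplex code and the equivalence in Theorem~\ref{thm-tdesign-wtcode}), and then reads off the weight distributions of the shortened codes from Theorem~\ref{thm-sctcode} and of the punctured Simplex code from Theorem~\ref{thm-pctcode}. You instead handle the Simplex code by direct counting (codewords vanishing at $t_1$ form the hyperplane orthogonal to the $t_1$-th column of $H_{(q,m)}$, and the one-weight property makes the shortened and punctured enumerators immediate), and you handle the shortened Hamming code via Prange's Theorem~\ref{thm-Prange}, justifying homogeneity by the transitivity of the collineation-induced automorphism group on the points of $\PG(m-1,\gf(q))$ --- a hypothesis that, as you rightly sensed, is implicit in the statement of Theorem~\ref{thm-Prange} and must be supplied. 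Both derivations yield the same formula $\frac{n-k}{n}A_k(\cH_{(q,m)})$ (Theorem~\ref{thm-sctcode} with $t=1$ reduces to Prange's expression), and the final bullet is handled identically in both proofs, by applying the MacWilliams identity to $1+(q^{m-1}-1)z^{q^{m-1}}$ and checking that the coefficient of $z$ vanishes while that of $z^2$ is positive. Your route is more elementary and self-contained for $|T|=1$; the paper's design-theoretic route has the advantage that the same $2$-design argument carries over verbatim to the two-coordinate case of Theorem~\ref{thm-Hamshort2}, whereas the single-coordinate Prange formula would not suffice there.
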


\begin{proof}
By Lemma \ref{lem-simpled}, $\cB_{q^{m-1}}(\cH_{(q,m)}^\perp)$ does not have repeated blocks. It is known that
the incidence structure
$(\cP(\cH_{(q,m)}^\perp), \cB_{q^{m-1}}(\cH_{(q,m)}^\perp))$ is a $2$-design \cite{DL17}. Since the Simplex code
$\cH_{(q,m)}^\perp$ has weight enumerator $1+(q^m-1)z^{q^{m-1}}$,
$(\cP(\cH_{(q,m)}^\perp), \cB_{k}(\cH_{(q,m)}^\perp))$ is the trivial $2$-design $(\cP(\cH_{(q,m)}^\perp), \{\emptyset\})$
or $(\cP(\cH_{(q,m)}^\perp), \emptyset)$
for each $k$ with $0 \leq k \leq n$ and $k \neq q^{m-1}$. It then follows from Theorem \ref{thm-tdesign-wtcode} that
$(\cP(\cH_{(q,m)}), \cB_{k}(\cH_{(q,m)}))$ is a $2$-design for each $k$ with $0 \leq k \leq n$. The desired conclusions
on  $(\cH_{(q,m)})_{\{t_1\}}$ and $(\cH_{(q,m)}^\perp)_{\{t_1\}}$  then follow from  Theorem \ref{thm-sctcode} and
Lemma  \ref{lem-HCwt}.

We now prove the conclusions on the code $((\cH_{(q,m)})_{\{t_1\}})^\perp$.  It follows from Theorems \ref{thm-shortenedcode}
and \ref{thm-pctcode} that
\begin{eqnarray}\label{eqn-april402}
A_k( ((\cH_{(q,m)})_{\{t_1\}})^\perp )
&=& A_k( ((\cH_{(q,m)})^\perp)^{\{t_1\}} ) \nonumber \\
&=& \sum_{i=0}^1 \frac{\binom{n-1}{k} \binom{k+i}{1} \binom{1}{i}}{\binom{n-1}{k-1+i} \binom{n}{1}} A_{k+i}((\cH_{(q,m)})^\perp).
\end{eqnarray}
Notice that $(\cH_{(q,m)})^\perp$ has weight enumerator $1+(q^m-1)z^{q^{m-1}}$. Combining this with (\ref{eqn-april402}),
we deduce that $A_k( ((\cH_{(q,m)})_{\{t_1\}})^\perp ) =0$ for all $k \not\in \{0, q^{m-1}-1, q^{m-1}\}$ and
$$
A_{q^{m-1}-1}( ((\cH_{(q,m)})_{\{t_1\}})^\perp )=(q-1)q^{m-1}
$$
and
$$
A_{q^{m-1}}( ((\cH_{(q,m)})_{\{t_1\}})^\perp )= q^{m-1}-1.
$$
This completes the proof of the desired conclusions on $((\cH_{(q,m)})_{\{t_1\}})^\perp$.

Finally, we prove the conclusions on the code $((\cH_{(q,m)}^\perp)_{\{t_1\}})^\perp$. Note that the weight enumerator
of $((\cH_{(q,m)}^\perp)_{\{t_1\}})^\perp$ is $1+(q^{m-1}-1)z^{q^{m-1}}$. The desired weight enumerator in (\ref{eqn-april401})
then follows from the MacWilliams identity. It is easily verified that the coefficient of $z$ in the polynomial in  (\ref{eqn-april401})
equals $0$, and the coefficient of $z^2$ is
\begin{eqnarray*}
\frac{ \binom{n-1}{2}(q-1)^2 + (q^{m-1}-1)[ \binom{q^{m-1}}{2} +\binom{n-1-q^{m-1}}{2} (q-1)^2 -(q-1)q^{m-1} (n-1-q^{m-1})   ]    }{q^{m-1}}  \\
= \frac{q(q-1)(q^{m-1}-1)}{2} >0.
\end{eqnarray*}
Consequently, $((\cH_{(q,m)}^\perp)_{\{t_1\}})^\perp$ has minimum weight $2$.
\end{proof}

\begin{table}[ht]
\begin{center}
\caption{Examples of the code $(\cH_{(q,m)})_{\{t_1\}}$}\label{tab-331-1}
\begin{tabular}{cccc} \hline
$q$  & $m$ & $[n, \kappa, d]$ &  Optimality   \\ \hline
$2$  & $3$  & $[6,3,3]$        & Yes \\
$2$  & $4$  & $[14,10,3]$        & Yes \\
$2$  & $5$  & $[30,25,3]$        & Yes \\
$2$  & $6$  & $[62,56,3]$        & Yes \\
$2$  & $7$  & $[126,119,3]$        & Yes \\
$3$  & $2$  & $[3, 1, 3]$        & Yes \\
$3$  & $3$  & $[12, 9, 3]$        & Yes \\
$3$  & $4$  & $[39, 35, 3]$        & Yes \\
$3$  & $5$  & $[120, 115, 3]$        & Yes \\
\hline
\end{tabular}
\end{center}
\end{table}

Theoretically, we have the following conclusions about the two shortened codes in Theorem \ref{thm-Hamshort1} and
their duals.
\begin{itemize}
\item Let $n \geq 7$ and $m \geq 2$. Then the shortened code $(\cH_{(q,m)})_{\{t_1\}}$ is both length-optimal
          and dimension-optimal with respect to the sphere-packing bound.
\item  The code $(\cH_{(q,m)}^\perp)_{\{t_1\}}$ meets the Griesmer bound.
\item  The code $((\cH_{(q,m)}^\perp)_{\{t_1\}})^\perp$ is distance-optimal with respect to the sphere-packing bound,
          and is MDS when $m=2$. Note that $((\cH_{(q,m)}^\perp)_{\{t_1\}})^\perp=(\cH_{(q,m)})^{\{t_1\}}$, which is a
          punctured Hamming code.
\item  The code $((\cH_{(q,m)})_{\{t_1\}})^\perp$  meets the Griesmer bound.
\end{itemize}

Table \ref{tab-331-1} lists  examples of the code $(\cH_{(q,m)})_{\{t_1\}}$, which show that the code is
distance-optimal in all these cases according to \cite{Macus}.  Table \ref{tab-331-2} lists  examples of the code $(\cH_{(q,m)}^\perp)_{\{t_1\}}$,
which show that the code is distance-optimal in all these cases according to \cite{Macus}.

\begin{table}[ht]
\begin{center}
\caption{Examples of the code $(\cH_{(q,m)}^\perp)_{\{t_1\}}$}\label{tab-331-2}
\begin{tabular}{cccc} \hline
$q$  & $m$ & $[n, \kappa, d]$ &  Optimality   \\ \hline
$2$  & $3$  & $[6,2,4]$        & Yes \\
$2$  & $4$  & $[14,3,8]$        & Yes \\
$2$  & $5$  & $[30, 4, 16]$        & Yes \\
$2$  & $6$  & $[62, 5, 32]$        & Yes \\
$2$  & $7$  & $[126, 6, 64]$        & Yes \\
$3$  & $2$  & $[3, 1, 3]$        & Yes \\
$3$  & $3$  & $[12, 2, 9]$        & Yes \\
$3$  & $4$  & $[39, 3, 27]$        & Yes \\
$3$  & $5$  & $[120, 4, 81]$        & Yes \\
\hline
\end{tabular}
\end{center}
\end{table}

\begin{theorem}\label{thm-Hamshort2}
Let $n=(q^m-1)/(q-1) \geq 6$, and let $t_1$ and $t_2$ be any two distinct coordinators of codewords in $\cH_{(q,m)}$. Then the following hold:
\begin{itemize}
\item $(\cH_{(q,m)})_{\{t_1,t_2\}}$
is an $[n-2, n-m-2, 3]$ code over $\gf(q)$ with
\begin{eqnarray*}
A_{k}((\cH_{(q,m)})_{\{t_1,t_2\}}) = \frac{\binom{k}{2} \binom{n-2}{k}}{ \binom{n}{2} \binom{n-2}{k-2}} A_k(\cH_{(q,m)})
\end{eqnarray*}
for $0 \leq k \leq n-2$, where $A_k(\cH_{(q,m)})$ was given in Lemma \ref{lem-HCwt}.
\item
$(\cH_{(q,m)}^\perp)_{\{t_1,t_2\}}$
is an $[n-2, m-2, q^{m-1}]$ code over $\gf(q)$ with weight enumerator
$$1+(q^{m-2}-1)z^{q^{m-1}}.$$
\item $((\cH_{(q,m)})_{\{t_1,t_2\}})^\perp$ is an $[n-2, m, q^{m-1}-2]$ code over $\gf(q)$ with weight enumerator
$$
1+ (q-1)^2q^{m-2} z^{q^{m-1}-2} + 2(q-1)q^{m-2} z^{q^{m-1}-1} +(q^{m-2}-1)z^{q^{m-1}}.
$$
\end{itemize}
\end{theorem}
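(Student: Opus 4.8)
The plan is to prove Theorem~\ref{thm-Hamshort2} by mirroring the structure of the proof of Theorem~\ref{thm-Hamshort1}, now shortening on $t=2$ coordinates instead of one. The crucial enabling fact is that $(\cP(\cH_{(q,m)}), \cB_k(\cH_{(q,m)}))$ is a $2$-design for every $k$, which was already established in the proof of Theorem~\ref{thm-Hamshort1} (via Lemma~\ref{lem-simpled}, the known $2$-design structure of the Simplex code, and Theorem~\ref{thm-tdesign-wtcode}). Since $t=2 < \min\{d, d^\perp\} = \min\{3, q^{m-1}\}$ when $n \geq 6$, the hypotheses of Theorem~\ref{thm-sctcode} are met, so the first item follows immediately: $(\cH_{(q,m)})_{\{t_1,t_2\}}$ has length $n-2$, dimension $n-m-2$, and the stated weight distribution obtained by substituting $t=2$ into the formula of Theorem~\ref{thm-sctcode}. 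The minimum distance $3$ is inherited because shortening a code with dual distance exceeding $t$ cannot decrease the minimum distance.

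\textbf{Second item.} For $(\cH_{(q,m)}^\perp)_{\{t_1,t_2\}}$ I would argue that shortening the Simplex code on any single nonzero coordinate kills a $(q-1)/q$ fraction of the weight and scales the number of codewords by $1/q$ (intuitively, passing from the projective ambient space $\PG(m-1,\gf(q))$ to $\PG(m-2,\gf(q))$). Iterating this for two coordinates, or equivalently applying Theorem~\ref{thm-sctcode} to the Simplex code $\cH_{(q,m)}^\perp$ (whose codewords all have weight $q^{m-1}$), yields that the nonzero weight remains $q^{m-1}$ while the number of such codewords drops to $q^{m-2}-1$, giving dimension $m-2$ and the weight enumerator $1+(q^{m-2}-1)z^{q^{m-1}}$. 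One subtlety here is that Theorem~\ref{thm-sctcode} requires $t < d$ for the code being shortened; since the Simplex code has $d = q^{m-1} \geq q \geq 2 > t$ when $m \geq 2$, this is fine, but I would double-check the $d^\perp$ condition (here $d^\perp = 3 > 2 = t$) so that the dimension drop is exactly $t$.

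\textbf{Third item.} The code $((\cH_{(q,m)})_{\{t_1,t_2\}})^\perp$ is the heart of the theorem. By Theorem~\ref{thm-shortenedcode}(1), $((\cH_{(q,m)})_{\{t_1,t_2\}})^\perp = (\cH_{(q,m)}^\perp)^{\{t_1,t_2\}}$, the Simplex code punctured on two coordinates. I would then invoke Theorem~\ref{thm-pctcode} with $t=2$ to compute its weight distribution from that of the Simplex code, which has the single nonzero weight $q^{m-1}$ with $A_{q^{m-1}} = q^m - 1$. Substituting into the formula
$$
A_k\bigl(((\cH_{(q,m)})_{\{t_1,t_2\}})^\perp\bigr) = \sum_{i=0}^2 \frac{\binom{n-2}{k}\binom{k+i}{2}\binom{2}{i}}{\binom{n-2}{k-2+i}\binom{n}{2}} A_{k+i}(\cH_{(q,m)}^\perp),
$$
only the terms with $k+i = q^{m-1}$ survive, so the support of the punctured code is concentrated on $k \in \{q^{m-1}-2, q^{m-1}-1, q^{m-1}\}$. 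Evaluating the three resulting binomial expressions should give the claimed multiplicities $(q-1)^2 q^{m-2}$, $2(q-1)q^{m-2}$, and $q^{m-2}-1$. I expect the \textbf{main obstacle} to be this binomial bookkeeping: the expressions involve ratios of binomial coefficients in $n=(q^m-1)/(q-1)$ and $q^{m-1}$ that must simplify cleanly, and one must verify they are genuine nonnegative integers summing correctly (a useful sanity check is that the total codeword count is $q^m$ and that $A_{q^{m-1}-2}>0$, which is what forces the minimum distance to be exactly $q^{m-1}-2$ rather than merely at least that). As a cross-check independent of Theorem~\ref{thm-pctcode}, I would confirm the weight enumerator against the MacWilliams transform of the second item's enumerator $1+(q^{m-2}-1)z^{q^{m-1}}$, exactly as was done in the final step of the proof of Theorem~\ref{thm-Hamshort1}.
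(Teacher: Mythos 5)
Your proposal is correct and takes essentially the same route as the paper, whose entire proof of this theorem is the remark that it is ``similar to that of Theorem \ref{thm-Hamshort1}'': the $2$-design property established there feeds Theorem \ref{thm-sctcode} (with $t=2$) for the two shortened codes, and Theorem \ref{thm-shortenedcode} combined with Theorem \ref{thm-pctcode} (or the MacWilliams identity) handles $((\cH_{(q,m)})_{\{t_1,t_2\}})^\perp$, and your binomial evaluations do simplify to the stated multiplicities. The only nit is your inequality chain $d=q^{m-1}\ge q\ge 2>t$, which is literally false for $t=2$; the condition you actually need is $q^{m-1}>2$, which is guaranteed by the hypothesis $n\ge 6$ (it excludes $q=2$, $m=2$).
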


\begin{proof}
The proof is similar to that of Theorem \ref{thm-Hamshort1}, and omitted here.
\end{proof}

\begin{table}[ht]
\begin{center}
\caption{Examples of the code $(\cH_{(q,m)})_{\{t_1,t_2\}}$}\label{tab-331-3}
\begin{tabular}{cccc} \hline
$q$  & $m$ & $[n, \kappa, d]$ &  Optimality   \\ \hline
$2$  & $3$  & $[5, 2, 3]$        & Yes  \\
$2$  & $4$  & $[13, 9, 3]$        & Yes \\
$2$  & $5$  & $[29, 24, 3]$        & Yes \\
$2$  & $6$  & $[61, 55, 3]$        & Yes \\
$2$  & $7$  & $[125, 118, 3]$        & Yes \\
$3$  & $3$  & $[11, 8, 3]$        & Yes \\
$3$  & $4$  & $[38, 34, 3]$        & Yes \\
$3$  & $5$  & $[119, 114, 3]$        & Yes \\
\hline
\end{tabular}
\end{center}
\end{table}

Let $n \geq 7$ and $m \geq 2$. Then the shortened code $(\cH_{(q,m)})_{\{t_1,t_2\}}$ is both length-optimal
          and dimension-optimal with respect to the sphere-packing bound. The code $((\cH_{(q,m)})_{\{t_1,t_2\}})^\perp$
          is MDS when $m=2$, and meets the Griesmer bound when $q >2$.
Table \ref{tab-331-3} lists  examples of the code $(\cH_{(q,m)})_{\{t_1,t_2\}}$, which show that the code is
distance-optimal in all these cases according to \cite{Macus}.  Table \ref{tab-331-4} lists  examples of the code $(\cH_{(q,m)}^\perp)_{\{t_1,t_2\}}$,
which show that the code is distance-optimal in most cases, and almost distance-optimal (i.e., the minimum dsitance
is one less than the best possible value) in three cases according to \cite{Macus}.

\begin{table}[ht]
\begin{center}
\caption{Examples of the code $(\cH_{(q,m)}^\perp)_{\{t_1,t_2\}}$}\label{tab-331-4}
\begin{tabular}{cccc} \hline
$q$  & $m$ & $[n, \kappa, d]$ &  Optimality   \\ \hline
$2$  & $3$  & $[5, 1, 4]$        & Almost \\
$2$  & $4$  & $[13, 2, 8]$        & Yes \\
$2$  & $5$  & $[29, 3, 16]$        & Yes \\
$2$  & $6$  & $[61, 4, 32]$        & Yes \\
$2$  & $7$  & $[125, 5, 64]$        & Yes \\
$3$  & $3$  & $[11, 1, 9]$        & Almost \\
$3$  & $4$  & $[38, 2, 27]$        & Almost \\
$3$  & $5$  & $[119, 3, 81]$        & Yes \\
\hline
\end{tabular}
\end{center}
\end{table}

\begin{theorem}\label{thm-Hamshort3}
Let $n=(q^m-1)/(q-1) \geq 6$, and let $T$ be the support of any codeword of weight 3 in $\cH_{(q,m)}$. Then
$(\cH_{(q,m)}^\perp)_{T}$ is an $[n-3, m-2, q^{m-1}]$ code over $\gf(q)$.
\end{theorem}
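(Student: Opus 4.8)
The plan is to bypass the design-theoretic machinery of Theorems \ref{thm-sctcode} and \ref{thm-pctcode}, since here we shorten the Simplex code $\cH_{(q,m)}^\perp$ on $t=3$ coordinates while its dual distance is $d(\cH_{(q,m)})=3$; thus $t=\min\{d(\cH_{(q,m)}^\perp),\,d(\cH_{(q,m)})\}=3$ falls outside the range $t<\min\{d,d^\perp\}$ those results require. Instead I would argue directly from the finite-geometry description of the Simplex code. Recall that $\cH_{(q,m)}^\perp$ has generator matrix $H_{(q,m)}$, so its codewords are precisely the vectors $\bv H_{(q,m)}$ with $\bv\in\gf(q)^m$, the $i$-th coordinate being $\bv\cdot g_i$, where $g_i$ is the $i$-th column of $H_{(q,m)}$, a point of $\PG(m-1,\gf(q))$. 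Every nonzero codeword has weight $q^{m-1}$, and the map $\bv\mapsto\bv H_{(q,m)}$ is a linear bijection onto the code.

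First I would translate the hypothesis on $T$. A weight-$3$ codeword of $\cH_{(q,m)}$ records a minimal linear dependency among three columns of $H_{(q,m)}$: if $T=\{i_1,i_2,i_3\}$, then there are nonzero scalars $a_1,a_2,a_3\in\gf(q)^*$ with $a_1 g_{i_1}+a_2 g_{i_2}+a_3 g_{i_3}=0$. Since these three columns are pairwise distinct points of $\PG(m-1,\gf(q))$ and the dependency is minimal (no two of them are dependent), they span a subspace $W\subseteq\gf(q)^m$ of vector dimension exactly $2$, i.e.\ a projective line. This is the one structural fact to pin down carefully; everything downstream is linear algebra.

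Next I would identify the shortened code. A codeword $\bv H_{(q,m)}$ vanishes on all of $T$ if and only if $\bv\cdot g_{i_j}=0$ for $j=1,2,3$, equivalently $\bv\in W^\perp$; since $\dim W=2$, we have $\dim W^\perp=m-2$. Hence the subcode $\cH_{(q,m)}^\perp(T)$ of codewords vanishing on $T$ consists of exactly the $q^{m-2}$ vectors $\bv H_{(q,m)}$ with $\bv\in W^\perp$. Because $\bv\mapsto\bv H_{(q,m)}$ is injective and these codewords are all zero on the three positions of $T$, puncturing on $T$ neither identifies distinct codewords nor alters any Hamming weight. Therefore $(\cH_{(q,m)}^\perp)_T$ has length $n-3$ and dimension $m-2$.

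Finally, for the minimum distance, each nonzero $\bv\in W^\perp$ yields a Simplex codeword of weight $q^{m-1}$ that is zero on $T$; deleting the three zero coordinates indexed by $T$ leaves the weight unchanged at $q^{m-1}$. Consequently every nonzero codeword of $(\cH_{(q,m)}^\perp)_T$ has weight exactly $q^{m-1}$, so the code is a constant-weight code with weight enumerator $1+(q^{m-2}-1)z^{q^{m-1}}$ and minimum distance $q^{m-1}$, as claimed. The only step demanding genuine care is the assertion $\dim W=2$: one must invoke both that the three points are pairwise distinct and that the size-$3$ dependency is minimal, so that no proper subset is dependent, thereby ruling out $\dim W\le 1$ and fixing the dimension drop at exactly $2$.
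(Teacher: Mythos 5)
Your proof is correct, but it reaches the dimension $m-2$ by a different route than the paper. The paper simply invokes Theorem \ref{thm-shortenedcode}(3): since $t=3=d(\cH_{(q,m)})$ and $T$ is the support of a minimum-weight codeword, the shortened dual $(\cH_{(q,m)}^\perp)_T$ has dimension $n-t-(n-m)+1=m-2$ at once; the minimum distance then follows, as in your last paragraph, from the fact that the Simplex code is constant-weight, so every surviving nonzero codeword still has weight $q^{m-1}$. You instead re-derive the dimension from scratch via the geometry of the parity-check matrix: a weight-$3$ codeword of $\cH_{(q,m)}$ is a minimal dependency among three columns, hence those columns span a plane $W$ with $\dim W=2$, and the codewords of $\cH_{(q,m)}^\perp$ vanishing on $T$ are exactly $\{\bv H_{(q,m)}: \bv\in W^\perp\}$, giving $q^{m-2}$ of them. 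Your argument is more self-contained and makes transparent \emph{why} the dimension drops by exactly $2$ rather than $3$ (the three shortening conditions are linearly dependent), whereas the paper's citation is shorter and generalizes mechanically to any code shortened on the support of a minimum-weight dual codeword; your key step --- that minimality of the dependency plus pairwise distinctness of the projective points forces $\dim W=2$ exactly --- is correctly identified and justified. One small caveat inherited from the theorem statement itself: the hypothesis $n\ge 6$ permits $m=2$ for $q\ge 5$, in which case $W^\perp=\{0\}$ and the code is trivial; both your argument and the paper's implicitly need $m\ge 3$, which the paper's proof at least states explicitly when counting nonzero codewords.
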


\begin{proof}
By Theorem \ref{thm-shortenedcode}, $(\cH_{(q,m)}^\perp)_{T}$ has dimension $m-2$. Since $m \geq 3$,
this code has $q^{m-2}  -1 \geq q-1 $ nonzero codewords. Consequently, $(\cH_{(q,m)}^\perp)_{T}$ has
minimum distance $q^{m-1}$.
\end{proof}

\begin{theorem}\label{thm-Hamshort4}
Let $n=(q^m-1)/(q-1)$, and let $T=\{i_1, i_2, \ldots, i_t\}$ be any subset of $t$ pairwise distinct coordinates in the codewords
in $\cH_{(q,m)}$.

If $t < q^{m-1}$, then $(\cH_{(q,m)})_{T}$ is an $[n-t, n-m-t, d((\cH_{(q,m)})_{T})]$ code over $\gf(q)$, where $d((\cH_{(q,m)})_{T}) \geq 3$.

If $t = q^{m-1}$ and $T$ is the support of a codeword of weight $q^{m-1}$ in $\cH_{(q,m)}^\perp$, then $(\cH_{(q,m)})_{T}$ is an $[n-t, n-m-t+1, d((\cH_{(q,m)})_{T})]$ code over $\gf(q)$, where $d((\cH_{(q,m)})_{T}) \geq 3$.
\end{theorem}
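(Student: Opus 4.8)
The plan is to obtain both assertions directly from Theorem \ref{thm-shortenedcode}, applied to the Simplex code $\cH_{(q,m)}^\perp$ rather than to the Hamming code itself. The point is that the dimension formulas in parts (2) and (3) of that theorem are governed by the minimum distance of the code being dualized, and the Simplex code $\cH_{(q,m)}^\perp$ has minimum distance exactly $q^{m-1}$ --- precisely the threshold separating the two cases of the present statement. Writing $\C = \cH_{(q,m)}^\perp$, which is $[n, m, q^{m-1}]$, one has $\C^\perp = \cH_{(q,m)}$, so $(\C^\perp)_T = (\cH_{(q,m)})_T$ is exactly the shortened code we must analyse.

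First I would dispose of the length and the minimum-distance bound, which are common to both cases. The length is $n - t$ by the definition of shortening. For the minimum distance, every nonzero codeword of $(\cH_{(q,m)})_T$ is the restriction to the complement of $T$ of a codeword $\bc \in \cH_{(q,m)}$ that vanishes on $T$; deleting the $T$-coordinates of such a $\bc$ removes only zero entries, so the restriction has the same weight as $\bc$, namely at least $d(\cH_{(q,m)}) = 3$. Hence $d((\cH_{(q,m)})_T) \geq 3$ in both cases.

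Next I would read off the dimensions. In the first case $t < q^{m-1} = d(\C)$, so Theorem \ref{thm-shortenedcode}(2) gives that $(\C^\perp)_T$ has dimension $n - t - m = n - m - t$. In the second case $t = q^{m-1} = d(\C)$, and since every nonzero codeword of the Simplex code has weight $q^{m-1}$, the set $T$ (the support of one such codeword) is exactly the support of a minimum-weight codeword of $\C$; thus Theorem \ref{thm-shortenedcode}(3) applies and gives dimension $n - t - m + 1 = n - m - t + 1$. Combining with the previous paragraph yields the stated parameters.

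The argument is essentially a matter of substituting the correct code into an existing theorem, so I do not anticipate a genuine obstacle. The only thing demanding care is the bookkeeping between a code and its dual: one must keep track that it is $(\C^\perp)_T$, with $\C$ the Simplex code, whose dimension Theorem \ref{thm-shortenedcode} controls, and that $(\cH_{(q,m)}^\perp)^\perp = \cH_{(q,m)}$ identifies this with $(\cH_{(q,m)})_T$. I would also note in passing that the minimum-distance claim is meaningful only because the computed dimensions $n-m-t$ and $n-m-t+1$ are positive in the stated ranges, guaranteeing that nonzero codewords exist.
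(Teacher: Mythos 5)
Your proof is correct and follows the same route as the paper: both arguments simply invoke Theorem \ref{thm-shortenedcode} (parts (2) and (3), applied with $\C$ the Simplex code $\cH_{(q,m)}^\perp$ so that $(\C^\perp)_T=(\cH_{(q,m)})_T$) for the dimension, and observe that shortening deletes only zero coordinates so the minimum distance cannot drop below $3$. The paper's proof is just a terser version of exactly this; your write-up fills in the bookkeeping it leaves implicit.
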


\begin{proof}
The desired conclusions on the length and dimension of $(\cH_{(q,m)})_{T}$ follow from Theorem \ref{thm-shortenedcode}.
By definition, $d((\cH_{(q,m)})_{T}) \geq 3$.
\end{proof}

We inform the reader that the code $(\cH_{(q,m)})_{T}$ in Theorem \ref{thm-Hamshort4} is both length-optimal and
dimension-optimal with respect to the sphere-packing bound when $n>7$, $m \geq 2$ and $t=|T|=3$. Hence,
Theorem \ref{thm-Hamshort4} does include a family of optimal shortened codes.

When $t \in \{1,2\}$, the parameters and the weight distribution of the code  $(\cH_{(q,m)})_{T}$ were described in
Theorems \ref{thm-Hamshort1} and \ref{thm-Hamshort2}. In many other cases,  $d((\cH_{(q,m)})_{T})=3$. Below
we present a general result.

A partial $k$-spread $\mathcal S$ of the projective space $\mathrm{PG}(m-1,q)$ is a collection of pairwise disjoint $k$-dimensional subspaces. By definition, every point of $\mathrm{PG}(m-1,q)$ is contained in at most one element of $\mathcal S$.

\begin{lemma}[\cite{Beut79}]\label{lem:t-cover}
Let $1\le k <m-1$, where $m$ and $k$ are integers, and let $r$ be the remainder  of $m$ divided by $k+1$.
Then there exists a partial $k$-spread with cardinality $\frac{q^m-q^r}{q^{k+1}-1} -q^r +1$.
\end{lemma}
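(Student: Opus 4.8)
The plan is to reduce everything to linear algebra over $\gf(q)$: identifying a $k$-dimensional subspace of $\PG(m-1,q)$ with a $(k+1)$-dimensional $\gf(q)$-subspace of $\gf(q)^m$, a partial $k$-spread becomes a family of subspaces of dimension $t:=k+1$ meeting pairwise only in $\{0\}$, and I must produce such a family of size $f(s):=\frac{q^{st+r}-q^r}{q^t-1}-q^r+1$, where $m=st+r$ with $0\le r<t$. A short computation rewrites the target as $f(s)=1+q^{r+t}\frac{q^{(s-1)t}-1}{q^t-1}$, which already points to an induction on $s$ powered by the field $\gf(q^t)$.

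First I would dispose of the case $r=0$, where $f(s)=\frac{q^{st}-1}{q^t-1}$, by the Desarguesian spread: viewing $\gf(q)^{st}=\gf(q^t)^s$, the $\gf(q^t)$-lines through the origin are pairwise trivially intersecting $t$-subspaces that partition the nonzero vectors, giving a full spread. For $r\ge 1$ I would run an induction on $s$ maintaining the invariant that the family $\mathcal S$ built in $\gf(q)^{st+r}$ leaves uncovered exactly a set of the shape $H\setminus D_0$, where $\dim H=t+r$ and $D_0\in\mathcal S$ with $D_0\subseteq H$. The base $s=1$ is the single subspace $\mathcal S=\{D_0\}$ with $H=\gf(q)^{t+r}$. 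The genuinely new base is $s=2$: in $V=\gf(q^{t+r})\oplus L$ with $L\le \gf(q^{t+r})$ an $\gf(q)$-subspace of dimension $t$, the subspaces $S_a=\{(av,v):v\in L\}$ for $a\in\gf(q^{t+r})$ are pairwise disjoint complements of $\gf(q^{t+r})\oplus 0$ (distinct $a,b$ make $v\mapsto(a-b)v$ injective), they partition the vectors with nonzero $L$-coordinate, and adjoining any fixed $t$-subspace $D_0\subseteq\gf(q^{t+r})\oplus 0$ yields $q^{t+r}+1=f(2)$ subspaces with hole $(\gf(q^{t+r})\oplus 0)\setminus D_0$.

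The inductive engine, passing from $s\ge 2$ to $s+1$, is a regulus construction. Writing $\gf(q)^{(s+1)t+r}=\gf(q)^{st+r}\oplus E$ with $\dim E=t$, for each member $D_i\in\mathcal S$ other than the distinguished $D_0$ I would impose a $\gf(q^t)$-structure on the $2t$-dimensional space $D_i\oplus E$ so that both $D_i$ and $E$ are spread lines, and keep the $q^t$ spread lines of $D_i\oplus E$ other than $E$. These $q^t(|\mathcal S|-1)$ subspaces are pairwise disjoint (two of them can only meet inside $E$, which has been discarded), and a cardinality count shows they cover precisely $\gf(q)^{(s+1)t+r}\setminus(H\oplus E)$. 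What remains is to spread the $(2t+r)$-dimensional space $H\oplus E$, which is exactly an instance of the $s=2$ case; stacking that spread on top contributes $q^{t+r}+1$ further subspaces, makes the total equal to $f(s+1)$, and reproduces the hole invariant at the next level.

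The main obstacle will be the bookkeeping in the inductive step: one must verify that the reguli built over the various $D_i$ are globally disjoint and that, taken together, they miss exactly $H\oplus E$ and nothing more. This hinges on two structural facts — that $D_i\oplus E$ and $D_j\oplus E$ intersect only in $E$, and that the counts of covered vectors telescope — so the delicate point is setting up the hole invariant $H\setminus D_0$ sharply enough that $H\oplus E$ is simultaneously the uncovered region left by the reguli and a space on which the $s=2$ construction can be invoked. Once that invariant is threaded correctly through the recursion, the value $1+q^{r+t}\frac{q^{(s-1)t}-1}{q^t-1}$ emerges automatically, completing the count.
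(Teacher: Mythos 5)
The paper does not prove this lemma at all --- it is quoted verbatim from Beutelspacher's 1979 paper with a citation and no argument --- so there is no in-paper proof to compare against; your proposal has to stand on its own, and it does. The reduction to $t$-dimensional ($t=k+1$) subspaces of $\gf(q)^{m}$ meeting pairwise in $\{0\}$, the rewriting of the target as $1+q^{r+t}\tfrac{q^{(s-1)t}-1}{q^t-1}$, the Desarguesian spread for $r=0$, the graph construction $S_a=\{(av,v):v\in L\}$ for the $s=2$ case, and the regulus-based induction are all sound, and this is essentially the classical Beutelspacher/Andr\'e-style construction. Every count checks out: the hole left by $f(s)$ subspaces has exactly $q^{t+r}-q^t$ nonzero vectors, matching $H\setminus D_0$; the intersections $(D_i\oplus E)\cap(D_j\oplus E)=E$ and $(D_i\oplus E)\cap(H\oplus E)=E$ follow from $D_i\cap D_j=0$ and $D_i\cap H=0$; and $q^t\bigl(f(s)-1\bigr)+q^{t+r}+1=f(s+1)$. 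The one place where your wording undersells what is needed is the claim that ``a cardinality count shows'' the reguli cover precisely the complement of $H\oplus E$: a count alone gives only the size of the covered set, and you need the set-theoretic identity $\bigcup_{i\neq 0}D_i=(\gf(q)^{st+r}\setminus H)\cup\{0\}$, which does follow from your hole invariant (the members other than $D_0$ cover exactly the nonzero vectors outside $H$). Since you explicitly maintain that invariant through the recursion, this is a presentational quibble rather than a gap; the proof is correct and complete in outline.
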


\begin{theorem}
Let $n=\frac{q^m-1}{q-1}$, and let $T=\{i_1, i_2, \cdots, i_t\}$ be any subset of $t$ pairwise distinct coordinates in the
codewords in $\mathcal H_{(q,m)}$ with $t< \frac{q-1}{q+1}n$ if $m$ is even,  and $t< \frac{q-1}{q+1}(n-q^2)$ if $m$ is odd. Then the
minimum distance of the shortened Hamming code $\left ( \mathcal{H}_{(q,m)} \right )_{T}$ is equal to 3.
\end{theorem}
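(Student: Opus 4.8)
The plan is to reduce the assertion to a purely geometric statement about the points of $\PG(m-1,q)$ and then exploit line spreads. First I would dispose of the easy inequality: since $\cH_{(q,m)}$ has minimum distance $3$ and shortening on $T$ retains only codewords that already vanish on $T$ (deleting the zero coordinates in $T$ does not change the Hamming weight), every nonzero codeword of $(\cH_{(q,m)})_T$ still has weight at least $3$. Hence it suffices to exhibit a single codeword of weight exactly $3$. Next I would translate "weight-$3$ codeword" into incidence geometry. The columns of $H_{(q,m)}$ are the points of $\PG(m-1,q)$ and any two of them are linearly independent, so a weight-$3$ codeword of $\cH_{(q,m)}$ corresponds precisely to a triple of distinct collinear points (three linearly dependent columns, whose dependence relation is automatically supported on all three coordinates, since otherwise two of the points would coincide). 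Such a codeword lies in $(\cH_{(q,m)})_T$ exactly when its support avoids $T$. Thus the theorem reduces to the claim: under the stated bound on $t$, some projective line carries at least three points of $\cP \setminus T$, equivalently some line meets $T$ in at most $q-2$ of its $q+1$ points.

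The key idea is to test this condition on a (partial) line spread. Let $r$ be the remainder of $m$ modulo $2$, so $r=0$ for $m$ even and $r=1$ for $m$ odd. Applying Lemma \ref{lem:t-cover} with $k=1$ yields a partial line spread $\mathcal S$ of $\PG(m-1,q)$ consisting of $N=\frac{q^m-q^r}{q^2-1}-q^r+1$ pairwise disjoint lines, each with $q+1$ points. A short simplification using $n=\frac{q^m-1}{q-1}$ shows that $N=\frac{n}{q+1}$ when $m$ is even (so $\mathcal S$ is in fact a full spread covering all of $\cP$) and $N=\frac{n-q^2}{q+1}$ when $m$ is odd; in both cases $(q-1)N$ equals exactly the right-hand side of the hypothesis on $t$. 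I would then argue by contradiction: if every line of $\mathcal S$ met $T$ in at least $q-1$ points, then, because the lines of $\mathcal S$ are pairwise disjoint, summing over $\mathcal S$ gives $t \geq (q-1)N$, contradicting the assumption $t < (q-1)N$.

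Therefore some line $\ell \in \mathcal S$ contains at most $q-2$ points of $T$, hence at least three points of $\cP\setminus T$; choosing three such collinear points produces a weight-$3$ codeword of $\cH_{(q,m)}$ whose support is disjoint from $T$, and this codeword survives into $(\cH_{(q,m)})_T$. Combined with the lower bound $d\bigl((\cH_{(q,m)})_T\bigr)\geq 3$, this gives the claimed value $3$. (The trivial case $m=2$, where $\PG(1,q)$ is itself a single line with $q+1>t+2$ points, is handled directly.)

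I expect the only delicate point to be the arithmetic verification that the partial-spread cardinality $N$ from Lemma \ref{lem:t-cover} coincides with $\frac{n}{q+1}$ for even $m$ and with $\frac{n-q^2}{q+1}$ for odd $m$; this identity is precisely what forces the two cases of the hypothesis to take their stated forms, and it is a routine manipulation. Everything else is the clean disjoint-cover counting above, so the real obstacle is bookkeeping rather than any conceptual hurdle.
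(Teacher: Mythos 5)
Your proposal is correct and follows essentially the same route as the paper: both reduce the problem to finding a line with at least three points outside $T$, invoke Lemma \ref{lem:t-cover} with $k=1$ to get a partial line spread of the stated cardinality, and use the same disjointness/pigeonhole count to find a line meeting $T$ in at most $q-2$ points. Your explicit treatment of the lower bound $d \geq 3$ and of the degenerate case $m=2$ (where the lemma's hypothesis $k<m-1$ fails) is a small but welcome addition the paper omits.
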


\begin{proof}
Let $P_0$, $\cdots$, $P_{n-1}$ be the points of $\mathrm{PG}(m-1,q)$ and let
$D=\{P_i: 0\le i \le n-1, i \not \in T\}$. By Lemma \ref{lem:t-cover}, we can choose
a $1$-spread $\mathcal S= \{U_1, \cdots, U_{L}\}$ with $L=\frac{q^m-q^r}{q^{2}-1} -q^r +1$, where
$$
r=\begin{cases}
0, \text{ if } m \text{ is even,}\\
1, \text{ otherwise.}
\end{cases}
$$
It is obvious that
$$
(q-1) L=\begin{cases}
\frac{q-1}{q+1}n, \text{ if } m \text{ is even,}\\
\frac{q-1}{q+1}(n-q^2), \text{ otherwise.}
\end{cases}
$$
By the pigeonhole principle, there must be a $j\in \{1, \cdots, L\}$
such that $|U_j \cap (\mathrm{PG}(m-1,q) \setminus D)| <q-1$. This clearly forces
$|U_j \cap D| \ge 3$. Then we choose three distinct points $P_{i_1'}, P_{i_2'}, P_{i_3'}$ in $U_j \cap D$.
Thus $P_{i_1'}, P_{i_2'}$ and $P_{i_3'}$ are collinear in $\mathrm{PG}(q,m-1)$.
It is easily seen that any nontrivial linear relationship among $P_{i_1'}, P_{i_2'}, P_{i_3'}$
gives rise to a codeword of weight $3$ in $\left ( \mathcal{H}_{(q,m)} \right )_{T}$.
This completes the proof.
\end{proof}

\section{Some shortened codes of the Reed-Muller codes}

Reed-Muller codes can be defined by either the univariate or the multivariate approach. Each approach has advantages
and disadvantages. We briefly recall the univariate
definition below. Let $m$ be a positive integer. Any function from $\gf(2^m)$ to $\gf(2)$ is called a (univariate)
Boolean function. Let $\bB_m$ denote the set of all Boolean functions on $\gf(2^m)$.  Every nonzero Boolean
function $f$ on $\gf(2^m)$ can be uniquely expressed as
\begin{eqnarray}\label{eqn-4601}
f(x)=\sum_{i=0}^{2^m-1} f_i x^i,
\end{eqnarray}
where $f_i \in \gf(2^m)$. Every integer $i$ with $0 \leq i \leq 2^m-1$ has the unique $2$-adic expansion
$i=\sum_{j=0}^{m-1} i_j 2^j$, where $i_j \in \{0,1\}$. The $2$-weight of $i$ is defined to be the Hamming weight
of $(i_0, i_1, \ldots, i_{m-1})$.  The algebraic degree of a Boolean function $f$ on $\gf(2^m)$ defined in (\ref{eqn-4601}),
denoted by $\deg(f)$, is defined to be the maximum 2-weight of all $i$ such that $f_i \neq 0$.

Let $\alpha$ be a generator of $\gf(2^m)^*$. Define $P_0=0$ and $P_i=\alpha^{i-1}$ for $1 \leq i \leq 2^m-1$.
The Reed-Muller code of length $2^m$ and order $r$ is defined by
$$
\cR(r,m)=\{(f(P_0), f(P_1), \ldots, f(P_{2^m-1})): f \in \bB_m, \deg(f) \leq r\}.
$$
The reader is referred to
\cite[Chapter 5]{DingBk2} and \cite{MS77} for detailed information on the Reed-Muller code.
The following are well known:
\begin{itemize}
\item  $\cR(r,m)^\perp=\cR(m-1-r, m)$.
\item $\cR(1,m)$ has parameters $[2^m, m+1, 2^{m-1}]$ and weight enumerator $1+(2^{m+1}-2)z^{2^{m-1}}+z^{2^m}$.
\item $\cR(m-2,m)$ has parameters $[2^m, 2^m-m-1, 4]$.
\end{itemize}

Our objective in this section is to study some shortened codes
of $\cR(1,m)$ and $\cR(m-2,m)$ and their duals. As before, we are only interested in optimal codes or codes meeting
a bound for linear codes.
The first result of this section is the following.

\begin{theorem}\label{THM-4305}
Let $m \geq 3$, and let $t_1$ be any coordinate. Then the following hold.
\begin{enumerate}
\item $\cR(1,m)_{\{t_1\}}$ is a $[2^m-1, m, 2^{m-1}]$ binary code with weight enumerator $1+(2^m-1)z^{2^{m-1}}$,
          and is equivalent to the binary Simplex code. The code meets the Griesmer bound.
\item  $(\cR(1,m)_{\{t_1\}})^\perp$ is a $[2^m-1,2^m- m-1, 3]$ binary code, and is equivalent to the binary Hamming code.
           The code meets the sphere-packing bound and is perfect.
\item    $\cR(m-2,m)_{\{t_1\}}$ is a $[2^m-1, 2^m-m-2, 4]$ binary code. The code is distance-optimal with respect to the
             sphere-packing bound.
\item   $(\cR(m-2,m)_{\{t_1\}})^\perp$ is a $[2^m-1, m+1, 2^{m-1}-1]$ binary code with weight enumerator
            $$
            1+(2^m-1)z^{2^{m-1}-1} +(2^m-1)z^{2^{m-1}} + z^{2^m-1}.
            $$
            This code meets the Griesmer bound.
\end{enumerate}
\end{theorem}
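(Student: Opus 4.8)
The plan is to exploit the duality $\cR(r,m)^\perp=\cR(m-1-r,m)$ together with the commutation rules $(\C_T)^\perp=(\C^\perp)^T$ and $(\C^T)^\perp=(\C^\perp)_T$ of Theorem~\ref{thm-shortenedcode}. These reduce the four items to two genuine computations, the shortened code $\cR(1,m)_{\{t_1\}}$ (item 1) and the punctured code $\cR(1,m)^{\{t_1\}}$ (item 4), since item 2 is the dual of item 1 and item 3 is the dual of item 4. The engine for both computations is that $\cR(1,m)$ supports $1$-designs: its permutation automorphism group contains the affine group $\AGL(m,2)$, which is transitive on the $2^m$ coordinates, so $(\cP(\cR(1,m)),\cB_i(\cR(1,m)))$ is a $1$-design for every $i$. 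Since $d(\cR(1,m))=2^{m-1}$ and $d(\cR(1,m)^\perp)=d(\cR(m-2,m))=4$, the hypotheses $1<\min\{d,d^\perp\}$ of Theorems~\ref{thm-sctcode} and~\ref{thm-pctcode} hold for $t=1$.

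For item 1, I apply Theorem~\ref{thm-sctcode} with $t=1$ to $\cR(1,m)$, whose weight enumerator is $1+(2^{m+1}-2)z^{2^{m-1}}+z^{2^m}$; with $n=2^m$ the formula collapses to $A_k=\frac{n-k}{n}A_k(\cR(1,m))$, yielding the one-weight enumerator $1+(2^m-1)z^{2^{m-1}}$ of a $[2^m-1,m,2^{m-1}]$ code. A projective one-weight code with these parameters is forced to be monomially equivalent to the binary Simplex code (its generator matrix columns must run over all $2^m-1$ nonzero vectors of $\gf(2)^m$), and the Griesmer sum $\sum_{i=0}^{m-1}\lceil 2^{m-1}/2^i\rceil=2^m-1$ meets the length with equality. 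Item 2 is then immediate by duality: $(\cR(1,m)_{\{t_1\}})^\perp=\cR(m-2,m)^{\{t_1\}}$ is the dual of a code equivalent to the Simplex code, hence a $[2^m-1,2^m-m-1,3]$ code equivalent to the binary Hamming code by the uniqueness statement quoted after Lemma~\ref{lem-HCwt}; being Hamming, it is perfect and meets the sphere-packing bound.

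For item 4, I apply Theorem~\ref{thm-pctcode} with $t=1$, which for $\cR(1,m)$ reduces to the Prange recursion $A_k(\cR(1,m)^{\{t_1\}})=\frac{n-k}{n}A_k(\cR(1,m))+\frac{k+1}{n}A_{k+1}(\cR(1,m))$. Substituting the three nonzero weights $0,2^{m-1},2^m$ produces the enumerator $1+(2^m-1)z^{2^{m-1}-1}+(2^m-1)z^{2^{m-1}}+z^{2^m-1}$ of a $[2^m-1,m+1,2^{m-1}-1]$ code, and a direct evaluation of $\sum_{i=0}^{m}\lceil(2^{m-1}-1)/2^i\rceil$ again returns $2^m-1$, so the code meets the Griesmer bound. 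Item 3 follows by duality, $\cR(m-2,m)_{\{t_1\}}=(\cR(1,m)^{\{t_1\}})^\perp$: its dimension is $(2^m-1)-(m+1)=2^m-m-2$, consistent with $\dim=\kappa-t$ from Theorem~\ref{thm-shortenedcode} since $t=1<d^\perp=2^{m-1}$. For the distance, shortening never decreases it, so $d\ge d(\cR(m-2,m))=4$; and since the weight-$4$ supports form a $1$-design, a fraction $\frac{n-4}{n}>0$ of them avoid $t_1$, so a weight-$4$ codeword survives the shortening and $d=4$.

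The \textbf{main obstacle}, such as it is, is conceptual rather than computational: identifying the one-weight code in item 1 with the Simplex code and the $[2^m-1,2^m-m-1,3]$ code in item 2 with the Hamming code. Here I rely on the classification of one-weight codes (Bonisoli) and on the Hamming uniqueness theorem already cited, rather than on ad hoc arguments. The remaining claims are routine checks I would keep separate: the distance-optimality of item 3 with respect to the sphere-packing bound amounts to noting that a hypothetical $[2^m-1,2^m-m-2,5]$ code would require redundancy at least $\log_2\!\bigl(1+n+\binom{n}{2}\bigr)$ with $n=2^m-1$, which exceeds the available $m+1$ redundancy bits for all $m\ge 3$; and the Griesmer evaluations above are elementary arithmetic.
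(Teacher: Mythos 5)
Your proposal is correct and follows essentially the same route as the paper: both reduce all four items to the shortened code $\cR(1,m)_{\{t_1\}}$ and the punctured code $\cR(1,m)^{\{t_1\}}$ via the duality relations of Theorem~\ref{thm-shortenedcode}, and then compute the weight distributions with Theorems~\ref{thm-sctcode} and~\ref{thm-pctcode}. The only (harmless) divergence is in verifying the design hypothesis: you use transitivity of $\AGL(m,2)$ to get $1$-designs, which suffices for $t=1$, whereas the paper invokes the known $3$-design property of $\cR(1,m)$ together with Theorem~\ref{thm-tdesign-wtcode}; your explicit treatment of the Simplex/Hamming identifications and the Griesmer and sphere-packing checks fills in details the paper omits.
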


\begin{proof}
We outline the proof as follows. Note that $\cB_{2^m}(\cR(1,m))=\{\cP(\cR(1,m))\}$. It is then straightforward to see that
$(\cP(\cR(1,m)), \cB_{2^m}(\cR(1,m) ))$ is a $3$-$(2^m, 2^m, 1)$ simple design. It is known that
$(\cP(\cR(1,m)), \cB_{2^{m-1}}(\cR(1,m)))$ is a $3$-$(2^m, 2^{m-1}, 2^{m-2}-1)$ simple design \cite[p. 143]{DingBk2}. Clearly,
$(\cP(\cR(1,m)), \cB_{k}(\cR(1,m) ))$ is the trivial $3$-design $(\cP(\cR(1,m)), \{\emptyset\} )$ or $(\cP(\cR(1,m)), \emptyset )$ for all $k
\not\in \{2^{m-1}, 2^m\}$. It then follows from Theorem \ref{thm-tdesign-wtcode} that $(\cP(\cR(1,m)^\perp), \cB_{k}(\cR(1,m)^\perp ))$
is a $3$-design for all $k$ with $0 \leq k \leq 2^m$. We are now ready to apply Theorems \ref{thm-sctcode} and
\ref{thm-pctcode}. The desired conclusions on the codes can be similarly proved as the conclusions of Theorem
\ref{thm-Hamshort1}. The details are omitted here.
\end{proof}

Similarly, one can prove the following result.

\begin{theorem}\label{THM-4306}
Let $m \geq 3$, and let $t_1$ and $t_2$ be two distinct coordinates. Then the following hold.
\begin{enumerate}
\item $\cR(1,m)_{\{t_1, t_2\}}$ is a $[2^m-2, m-1, 2^{m-1}-1]$ binary code with weight enumerator $$1+(2^{m-1}-1)z^{2^{m-1}-1}.$$
         This code almost meets the Griesmer bound.
\item  $(\cR(1,m)_{\{t_1, t_2\}})^\perp$ is a $[2^m-2, 2^m-m-1, 2]$ binary code, and is distance-optimal with respect to the sphere-packing bound.
\item $\cR(m-2,m)_{\{t_1, t_2\}}$ is a $[2^m-2, 2^m-m-3, 4]$ binary code and is distance-optimal with respect to the sphere-packing bound.
\item $(\cR(m-2,m)_{\{t_1, t_2\}})^\perp$ is a $[2^m-2, m+1, 2^{m-1}-2]$ binary code with weight enumerator
$$
1+(2^{m-1}-1)z^{2^{m-1}-2} +2^m z^{2^{m-1}-1} +(2^{m-1}-1)z^{2^{m-1}} + z^{2^m-2}.
$$
This code almost meets the Griesmer bound.
\end{enumerate}
\end{theorem}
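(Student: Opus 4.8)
The plan is to mirror the structure of the proof of Theorem \ref{THM-4305}, since $\cR(1,m)_{\{t_1,t_2\}}$ and its relatives are handled by the same design-theoretic machinery with $t=2$ instead of $t=1$. First I would record the weight data of $\cR(1,m)$, namely the enumerator $1+(2^{m+1}-2)z^{2^{m-1}}+z^{2^m}$, and invoke the fact (cited at \cite[p.~143]{DingBk2}) that $(\cP(\cR(1,m)),\cB_{2^{m-1}}(\cR(1,m)))$ is a $3$-$(2^m,2^{m-1},2^{m-2}-1)$ design while $\cB_{2^m}$ gives the trivial $3$-$(2^m,2^m,1)$ design and all other $\cB_k$ are trivial. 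Since $3<\min\{d,d^\perp\}$ fails to be the relevant bound, I must check the hypotheses of Theorem \ref{thm-sctcode} with $t=2$: here $d(\cR(1,m))=2^{m-1}$ and $d^\perp=d(\cR(m-2,m))=4$, so $2<\min\{2^{m-1},4\}$ holds for $m\ge 3$, and the $2$-design property follows from the $3$-design property by the standard fact that a $t$-design is also a $(t-1)$-design.

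Once the $2$-design property is in hand, the four parts are largely bookkeeping. For part (1), I would apply Theorem \ref{thm-sctcode} directly to $\cR(1,m)$ with $t=2$ to get the dimension $m-1$ and the closed-form weight distribution; substituting $A_{2^{m-1}}(\cR(1,m))=2^{m+1}-2$ into $A_k(\C_T)=\binom{k}{2}\binom{n-2}{k}/\big(\binom{n}{2}\binom{n-2}{k-2}\big)\,A_k(\C)$ should collapse to the single nonzero term $(2^{m-1}-1)z^{2^{m-1}-1}$, yielding the stated enumerator. For part (2) I use $(\cR(1,m)_{\{t_1,t_2\}})^\perp=(\cR(1,m)^\perp)^{\{t_1,t_2\}}=\cR(m-2,m)^{\{t_1,t_2\}}$ via Theorem \ref{thm-shortenedcode}, reading off length $2^m-2$ and dimension $2^m-m-1$; the minimum distance $2$ then follows either from the general MacWilliams computation or from Theorem \ref{thm-pctcode}. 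Parts (3) and (4) are dual to (1) and (2): $\cR(m-2,m)_{\{t_1,t_2\}}$ and its dual $(\cR(m-2,m)_{\{t_1,t_2\}})^\perp=\cR(1,m)^{\{t_1,t_2\}}$ are obtained by the same two theorems applied to $\cR(m-2,m)$, whose weight data I would get either directly or by MacWilliams from $\cR(1,m)$.

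The genuinely delicate points are the minimum-distance claims rather than the weight-distribution formulas. For $\cR(m-2,m)_{\{t_1,t_2\}}$ in part (3), the weight-distribution formula of Theorem \ref{thm-sctcode} gives $A_k$ for $k\ge d=4$, but I must separately rule out weights $1,2,3$ in the shortened code; since shortening can only increase the minimum distance above that of a punctured version and $\cR(m-2,m)$ has $d=4$, the shortened code still has minimum distance exactly $4$, and I would confirm that $A_4$ computed from the formula is nonzero. For parts (1) and (4), establishing that the minimum distance equals $2^{m-1}-1$ and $2^{m-1}-2$ respectively requires checking that the corresponding top-weight term survives; the cleanest route is to verify directly from the enumerator that the claimed lowest-weight coefficient is strictly positive, exactly as the $z^2$-coefficient was computed at the end of Theorem \ref{thm-Hamshort1}.

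The main obstacle I anticipate is verifying the minimum distance of the punctured duals in parts (2) and (3) when $T=\{t_1,t_2\}$ is an \emph{arbitrary} pair of coordinates rather than the support of a low-weight codeword. The design hypothesis guarantees that the weight distribution of $\C_T$ (and hence of $\C^T$ for the dual) is independent of the choice of $T$, so it suffices to exhibit the minimum distance for one convenient choice; this is precisely where the equivalence in Theorem \ref{thm-tdesign-wtcode} earns its keep, letting me replace a case analysis over pairs $\{t_1,t_2\}$ with a single closed-form evaluation. Given that all the distributional independence is already packaged in Theorems \ref{thm-sctcode} and \ref{thm-pctcode}, I expect the proof to reduce, as the authors indicate, to the same steps as Theorem \ref{THM-4305} with the arithmetic carried out at $t=2$, and so I would state it by analogy and suppress the routine substitutions.
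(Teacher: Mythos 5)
Your overall strategy is exactly the one the paper intends: establish that $(\cP(\cR(1,m)),\cB_k(\cR(1,m)))$ is a $3$-design (hence a $2$-design) for every $k$, then feed the known weight enumerators of $\cR(1,m)$ and $\cR(m-2,m)=\cR(1,m)^\perp$ into Theorems \ref{thm-sctcode} and \ref{thm-pctcode} with $t=2$, using $(\C_T)^\perp=(\C^\perp)^T$ for the dual codes. The paper omits the proof precisely because it is the $t=2$ analogue of Theorem \ref{THM-4305}, and your treatment of parts (2)--(4), including the checks that $A_4$ of the shortened code is positive and that the lowest-weight coefficient of each resulting enumerator is positive, is sound; I verified that Theorem \ref{thm-pctcode} does reproduce the enumerator displayed in part (4).

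However, your claim in part (1) that the substitution ``should collapse to the single nonzero term $(2^{m-1}-1)z^{2^{m-1}-1}$'' does not survive the computation you are deferring. Theorem \ref{thm-sctcode} gives $A_k(\cR(1,m)_{\{t_1,t_2\}})=\frac{(n-k)(n-k-1)}{n(n-1)}A_k(\cR(1,m))$ with $n=2^m$, which equals $2^{m-1}-1$ at $k=2^{m-1}$ and vanishes at $k=2^m$ and at every other $k>0$; the resulting enumerator is $1+(2^{m-1}-1)z^{2^{m-1}}$, so the code has parameters $[2^m-2,m-1,2^{m-1}]$ and in fact meets the Griesmer bound with equality (it is the shortened Simplex code, compare $(\cH_{(2,m)}^\perp)_{\{t_1\}}$ in Theorem \ref{thm-Hamshort1}). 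This is forced on general grounds: shortening deletes only coordinates on which the surviving codewords vanish, so it cannot produce a nonzero weight below $d(\cR(1,m))=2^{m-1}$. A sanity check at $m=3$: $\cR(1,3)$ is the self-dual $[8,4,4]$ code, and shortening on two coordinates yields a $[6,2,4]$ code with enumerator $1+3z^4$, not a $[6,2,3]$ code with $1+3z^3$. So either the statement of part (1) contains a misprint that a careful execution of your own plan would have exposed, or your proof of part (1) as written asserts something false; in either case you cannot simply declare that the arithmetic yields the displayed enumerator without carrying it out.
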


Note that all the codes in Theorems \ref{THM-4305} and \ref{THM-4306} are either optimal or almost optimal. We have also
the following.

\begin{theorem}\label{THM-4309}
Let $m \geq 3$, and let $t_1$, $t_2$ and $t_3$ be three pairwise distinct coordinates. Then the following hold.
\begin{enumerate}
\item $(\cR(m-2,m)_{\{t_1, t_2,t_3\}})^\perp$ is a $[2^m-3, m+1, 2^{m-1}-3]$ binary code with weight enumerator
$$
1 + (2^{m-2}-1)z^{2^{m-1}-3}  + 3\times 2^{m-2} z^{2^{m-1}-2} + 3\times 2^{m-2} z^{2^{m-1}-1} + (2^{m-2}-1)z^{2^{m-1}} + z^{2^m-3}.
$$
This code almost meets the Griesmer bound.
\item $\cR(m-2,m)_{\{t_1, t_2,t_3\}}$ is a $[2^m-3, 2^m-m-4, 4]$ binary code and is distance-optimal with respect to the sphere-packing bound.
\end{enumerate}
\end{theorem}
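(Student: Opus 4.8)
The plan is to reduce both parts to facts about the first‑order Reed--Muller code $\cR(1,m)$ through the shortening--puncturing duality $\C_T=((\C^\perp)^T)^\perp$ recorded in Theorem~\ref{thm-shortenedcode}. Writing $T=\{t_1,t_2,t_3\}$ and using $\cR(m-2,m)^\perp=\cR(1,m)$, part~(1) asks for the weight enumerator of
$$
(\cR(m-2,m)_T)^\perp=(\cR(m-2,m)^\perp)^T=\cR(1,m)^T,
$$
the code obtained by puncturing $\cR(1,m)$ on three coordinates. Since $\cR(1,m)$ has dual distance $d^\perp=d(\cR(m-2,m))=4$ and only the two nonzero weights $2^{m-1}$ and $2^m$, both of which support $3$‑designs (exactly the designs $\cB_{2^{m-1}}(\cR(1,m))$ and $\cB_{2^m}(\cR(1,m))$ already used in the proof of Theorem~\ref{THM-4305}), the hypotheses of Theorem~\ref{thm-pctcode} hold with $t=3<4$. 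First I would substitute $n=2^m$, $t=3$, and the weight enumerator $1+(2^{m+1}-2)z^{2^{m-1}}+z^{2^m}$ of $\cR(1,m)$ into the formula of Theorem~\ref{thm-pctcode}. The only nonzero contributions come from $A_0$, from $A_{2^{m-1}}$ (which spreads over the four weights $2^{m-1}-3,\ldots,2^{m-1}$), and from the all‑ones word $A_{2^m}$ (which contributes only at $k=2^m-3$, the values $k=2^m,2^m-1,2^m-2$ being out of range for length $2^m-3$). Collecting terms yields exactly the claimed enumerator; in particular the dimension is $m+1$ and the minimum weight is $2^{m-1}-3$, which is attained since its coefficient $2^{m-2}-1\ge 1$ for $m\ge 3$.

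For part~(2), the code $\cR(m-2,m)_T=(\cR(1,m)^T)^\perp$ is the dual of the code in part~(1); hence its length is $2^m-3$ and its dimension is $(2^m-3)-(m+1)=2^m-m-4$. (Equivalently, this dimension follows from Theorem~\ref{thm-shortenedcode}(2) applied to $\cR(1,m)$, whose minimum distance $2^{m-1}$ exceeds $t=3$.) Because shortening never decreases the minimum distance, $d(\cR(m-2,m)_T)\ge d(\cR(m-2,m))=4$. It therefore remains only to produce a single codeword of weight $4$, that is, a weight‑$4$ word of $\cR(m-2,m)$ whose support avoids $T$.

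Here I would use the geometric description of the minimum‑weight words. Identifying the coordinates of $\cR(m-2,m)$ with the points of $\gf(2^m)\cong\AG(m,2)$, a word of weight $4$ lies in $\cR(m-2,m)=\cR(1,m)^\perp$ precisely when its four support points sum to $0$, i.e.\ when they form a $2$‑flat (affine plane) of $\AG(m,2)$. Let $Q_j$ be the point indexed by $t_j$ and set $Q_4=Q_1+Q_2+Q_3$. Since $2^m\ge 8>4$, I can choose a point $P_0\notin\{Q_1,Q_2,Q_3,Q_4\}$; then the vectors $s_j=P_0+Q_j$ $(j=1,2,3)$ are nonzero, distinct, and satisfy $s_1+s_2+s_3=P_0+Q_4\neq 0$, so they are $\gf(2)$‑linearly independent. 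Consequently $W=\langle s_1+s_2,\,s_1+s_3\rangle$ is a $2$‑dimensional subspace whose nonzero vectors are $s_1+s_2,\,s_1+s_3,\,s_2+s_3$, none of which equals any $s_j$ (each such equality would force either some $s_j=0$ or $s_1+s_2+s_3=0$). Hence $s_j\notin W$, so the $2$‑flat $P_0+W$ contains none of $Q_1,Q_2,Q_3$, and its indicator is a weight‑$4$ codeword of $\cR(m-2,m)$ vanishing on $T$. This forces $d(\cR(m-2,m)_T)=4$.

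I expect the last step---the uniform construction of a $2$‑flat disjoint from three prescribed points---to be the only real obstacle, with the boundary case $m=3$ (the Fano configuration) being the delicate one: there a line avoids three given points iff they are not collinear, and the choice $P_0\neq Q_4$ is exactly what rules out the degenerate collinear alternative $s_1+s_2+s_3=0$. The remaining optimality assertions (almost meeting the Griesmer bound in part~(1) and sphere‑packing distance‑optimality in part~(2)) are then routine verifications from the now‑known parameters, along the lines already indicated for Theorems~\ref{THM-4305} and~\ref{THM-4306}.
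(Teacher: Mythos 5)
Your proposal is correct. The paper's own proof of Theorem~\ref{THM-4309} is only a one-line pointer: the conclusions are said to follow ``similarly'' from Theorems~\ref{thm-tdesign-wtcode}, \ref{thm-sctcode} and \ref{thm-pctcode} (or, for the parameters alone, from Theorems~\ref{thm-april4101} and \ref{thm-april4102}), with all details omitted. Your treatment of part~(1) is exactly that intended route: since $\cR(1,m)$ supports $3$-designs and $t=3<d(\cR(m-2,m))=4$, Theorem~\ref{thm-pctcode} applied to $\cR(1,m)^T=(\cR(m-2,m)_T)^\perp$ yields the stated enumerator, and your explicit bookkeeping (e.g.\ the coefficient $2^{m-2}-1$ at $z^{2^{m-1}-3}$, which also certifies the minimum distance) is precisely what the paper leaves implicit. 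The one place you genuinely diverge is the minimum distance in part~(2): the paper would read $A_4(\cR(m-2,m)_T)>0$ directly off the formula of Theorem~\ref{thm-sctcode}, which applies because $\cR(m-2,m)$ also supports $3$-designs and $A_4(\cR(m-2,m))>0$, whereas you construct an explicit $2$-flat of $\AG(m,2)$ disjoint from the three shortened coordinates. Your construction is sound --- the $\gf(2)$-independence of $s_1,s_2,s_3$ follows from $s_j\neq 0$, their distinctness, and $s_1+s_2+s_3=P_0+Q_4\neq 0$, and the verification $s_j\notin W$ is correct --- and it buys a self-contained, design-free argument (and correctly isolates the delicate $m=3$ case), at the cost of being longer than the one-line appeal to Theorem~\ref{thm-sctcode}. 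The remaining optimality assertions are routine in either version.
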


\begin{proof}
The desired conclusions can be similarly proved with Theorems \ref{thm-tdesign-wtcode}, \ref{thm-sctcode}, and \ref{thm-pctcode}.
The parameters of the two codes can also be settled with Theorems \ref{thm-april4101} and \ref{thm-april4102}. The details are omitted.
\end{proof}

\section{Some shortened codes of the ovoid codes}

In the projective space $\PG(3, \gf(q))$ with $q>2$, an \emph{ovoid}\index{ovoid} $\cV$
is a set of $q^2+1$ points such that no three of them are collinear (i.e., on the same line). In other words, an ovoid is a $(q^2+1)$-cap (a cap with $q^2+1$ points) in
$\PG(3, \gf(q))$, and thus a maximal cap. Two ovoids
are said to be \emph{equivalent}\index{equivalent ovoids} if there is a collineation (i.e.,
automorphism) of $\PG(3, \gf(q))$ that sends one to the other.

A \emph{classical ovoid} $\cV$ can be defined as the set of all points given by
\begin{eqnarray}
\cV=\{(0,0,1, 0)\} \cup \{(x,\, y,\, x^2+xy +ay^2,\, 1): x,\, y \in \gf(q)\},
\end{eqnarray}
where $a \in \gf(q)$ is such that the polynomial $x^2+x+a$ has no root in $\gf(q)$.
Such ovoid is called an \emph{elliptic quadric}\index{quadric}, as the points
come from a non-degenerate elliptic quadratic form.

For $q=2^{2e+1}$ with $e \geq 1$, there is an ovoid which is not an elliptic quadric,
and is called the \emph{Tits oviod}\index{Tits ovoid}. It is defined by
\begin{eqnarray}
\cT=\{(0,0,1,0)\}\cup \{(x,\,y,\, x^{\sigma} + xy +y^{\sigma+2},\,1): x, \, y \in \gf(q)\},
\end{eqnarray}
where $\sigma=2^{e+1}$.

For odd $q$, any ovoid is an elliptic quadric. For even $q$, Tits ovoids are the only known
ones which are not elliptic quadratics. In the case that $q$ is even, the elliptic quadrics and
the Tits ovoid are not equivalent.

Let $\cV$ be an ovoid in $\PG(3, \gf(q))$ with $q>2$. Denote by
$$
\cV=\{\bv_1, \bv_2, \ldots, \bv_{q^{2}+1}\},
$$
where each $\bv_i$ is a column vector in $\gf(q)^4$. Let $\C_{\cV}$ be the linear code over $\gf(q)$
with generator matrix
\begin{eqnarray}
G_{\cV}=\left[\bv_1 \bv_2 \cdots \bv_{q^{2}+1}\right].
\end{eqnarray}
It is known that $\C_{\cV}$ is a $[q^2+1, 4, q^2-q]$ code over $\gf(q)$ with weight enumerator
\begin{eqnarray}\label{eqn-wtdistovidcode}
1+(q^2-q)(q^2+1)z^{q^2-q}+(q-1)(q^2+1)z^{q^2}
\end{eqnarray}
and its dual $\C_{\cV}^\perp$ is a $[q^2+1, q^2-3, 4]$ code over $\gf(q)$ \cite[Chapter 13]{DingBk2}.
Conversely, the set of column vectors of a generator matrix of any  $[q^2+1, q^2-3, 4]$ code over $\gf(q)$
is an ovoid in $\PG(3, \gf(q))$. Hence,  ovoids in $\PG(3, \gf(q))$ and $[q^2+1, q^2-3, 4]$ codes over $\gf(q)$
are the same, and a $[q^2+1, 4, q^2-q]$ code over $\gf(q)$ is called an ovoid code over $\gf(q)$.

The weight distribution of an ovoid  is given in the following lemma \cite[p. 324]{DingBk2}, and will be employed later.

\begin{lemma}\label{lem-iccodedual}
Let $q \geq 4$, and let $\C$ be a $[q^2+1, 4, q^2-q]$ code over $\gf(q)$.
Then the weight distribution of $\C^\perp$ is given by
\begin{eqnarray}
q^4 A_\ell(\C^\perp) &=& \binom{q^2+1}{\ell}(q-1)^\ell +
     u \sum_{i+j=\ell} \binom{q^2-q}{i}(-1)^i \binom{q+1}{j} (q-1)^j  + \nonumber \\
   && v\left[ (-1)^\ell \binom{q^2}{\ell} +(-1)^{\ell-1} (q-1) \binom{q^2}{\ell-1} \right]
\end{eqnarray}
for all $4 \leq \ell \leq q^2$, and
$$
q^4 A^\perp_{q^2+1}=(q-1)^{q^2+1}+u(q-1)^{q+1}+v(q-1),
$$
where
\begin{eqnarray}
u=(q^2-q)(q^2+1), \ v=(q-1)(q^2+1).
\end{eqnarray}
\end{lemma}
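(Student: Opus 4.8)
The plan is to obtain the weight distribution of $\C^\perp$ directly from that of $\C$ through the MacWilliams identity, since the full weight enumerator of the ovoid code $\C$ is already recorded in \eqref{eqn-wtdistovidcode}. Writing $n=q^2+1$ and noting $|\C|=q^4$, the homogeneous weight enumerator of $\C$ is
\begin{equation*}
W_{\C}(x,y)=x^{n}+u\,x^{q+1}y^{q^2-q}+v\,x\,y^{q^2},
\end{equation*}
where $u=(q^2-q)(q^2+1)$ and $v=(q-1)(q^2+1)$ are the multiplicities in \eqref{eqn-wtdistovidcode}, and the $x$-exponents are $n$, $n-(q^2-q)=q+1$ and $n-q^2=1$. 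The MacWilliams identity then reads
\begin{equation*}
q^4\,W_{\C^\perp}(x,y)=W_{\C}\bigl(x+(q-1)y,\;x-y\bigr),
\end{equation*}
so that $q^4A_\ell(\C^\perp)$ is precisely the coefficient of $x^{\,n-\ell}y^{\ell}$ on the right-hand side. The whole proof is then the extraction of this coefficient.

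First I would expand the three resulting terms separately. The term $(x+(q-1)y)^{n}$ contributes $\binom{q^2+1}{\ell}(q-1)^{\ell}$, which is exactly the first summand of the claimed formula. The term $v\,(x+(q-1)y)(x-y)^{q^2}$ splits, according to whether the linear factor supplies $x$ or $(q-1)y$, into $(-1)^{\ell}\binom{q^2}{\ell}$ and $(-1)^{\ell-1}(q-1)\binom{q^2}{\ell-1}$ after matching powers of $y$; these together form the bracketed summand weighted by $v$. The middle term $u\,(x+(q-1)y)^{q+1}(x-y)^{q^2-q}$ requires a convolution: expanding both factors by the binomial theorem, letting $i$ index $(x-y)^{q^2-q}$ and $j$ index $(x+(q-1)y)^{q+1}$, and collecting all pairs with $i+j=\ell$ produces $u\sum_{i+j=\ell}\binom{q^2-q}{i}(-1)^{i}\binom{q+1}{j}(q-1)^{j}$. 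Adding the three contributions reproduces the stated expression for every $4\le\ell\le q^2$.

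For the extremal weight $\ell=q^2+1$ I would specialise the same three expansions rather than reopen the calculation. The first term gives $\binom{q^2+1}{q^2+1}(q-1)^{q^2+1}=(q-1)^{q^2+1}$. In the middle term the only admissible pair is $(i,j)=(q^2-q,\,q+1)$, and since $q^2-q=q(q-1)$ is always even we have $(-1)^{q^2-q}=1$, so this term yields $u(q-1)^{q+1}$. In the last term $\binom{q^2}{q^2+1}=0$ and $\binom{q^2}{q^2}=1$, leaving the single contribution $v(-1)^{q^2}(q-1)$, which together with the first two pieces produces the displayed value of $q^4A^\perp_{q^2+1}$.

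The conceptual content is entirely contained in the MacWilliams identity, so the three binomial expansions are routine. I expect the only real obstacle to be the bookkeeping in the middle term: one must correctly pair the two binomial indices, propagate the signs coming from $(x-y)^{q^2-q}$, and carefully track the parity facts (the evenness of $q^2-q$, and the parity of $q^2$) used to simplify the extremal coefficient. This is index-and-sign careful work rather than anything conceptually hard.
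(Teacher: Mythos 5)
The paper does not actually prove this lemma: it is quoted from \cite[p.~324]{DingBk2} and used as a black box, so there is no in-paper argument to compare against. Your route --- apply the MacWilliams identity to the known weight enumerator (\ref{eqn-wtdistovidcode}) of the ovoid code and read off the coefficient of $x^{n-\ell}y^{\ell}$ --- is the natural self-contained derivation. For $4\le \ell\le q^2$ your three expansions are correct and reproduce the displayed formula exactly (the first term from $(x+(q-1)y)^{q^2+1}$, the convolution from $u\,(x+(q-1)y)^{q+1}(x-y)^{q^2-q}$, and the two-term bracket from $v\,(x+(q-1)y)(x-y)^{q^2}$).

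The extremal case $\ell=q^2+1$, however, contains a genuine problem. Your own intermediate expression for the third contribution is $v(-1)^{q^2}(q-1)$, and this equals the displayed $+v(q-1)$ only when $q$ is even; for odd $q$ it is $-v(q-1)$. The sign cannot be argued away: for $q=5$ one has $(q-1)^{q^2+1}+u(q-1)^{q+1}=4^{26}+520\cdot 4^{6}\equiv 416\equiv v(q-1)\pmod{5^4}$, so only the choice $-v(q-1)$ makes $q^4A^{\perp}_{q^2+1}$ divisible by $q^4$, i.e.\ makes $A^{\perp}_{q^2+1}$ an integer. Hence either the lemma as printed carries a sign typo for odd $q$ (the correct identity being $q^4A^{\perp}_{q^2+1}=(q-1)^{q^2+1}+u(q-1)^{q+1}+(-1)^{q^2}v(q-1)$), or the last display must be restricted to even $q$. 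Your write-up lists ``the parity of $q^2$'' among the facts to be tracked but then asserts without justification that the computation ``produces the displayed value''; that final assertion is false for odd $q$ and should be replaced by the corrected sign (the rest of your argument stands as is).
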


We are now ready to study some shortened codes of ovoid codes, and have the following results.

\begin{theorem}\label{thm-4301}
Let $q \geq 4$, and let $\C$ be a $[q^2+1, 4, q^2-q]$ code over $\gf(q)$. For any coordinate $t_1$,
the following hold.
\begin{enumerate}
\item $\C_{\{t_1\}}$ is a $[q^2, 3, q^2-q]$ code over $\gf(q)$ with weight enumerator
          \begin{eqnarray}\label{eqn-4201}
          1+q(q^2-1)z^{q^2-q} + (q-1)z^{q^2}.
          \end{eqnarray}
\item  $(\C_{\{t_1\}})^\perp$ is a $[q^2, q^2-3, 3]$ almost MDS code over $\gf(q)$.

\item $((\C^\perp)_{\{t_1\}})^\perp$ is a $[q^2, 4, q^2-q-1]$  code over $\gf(q)$ with weight enumerator
         $$
         1+q^2(q-1)^2z^{q^2-q-1} + q(q^2-1)z^{q^2-q} +q^2(q-1)z^{q^2-1} + (q-1)z^{q^2}.
         $$

\item $(\C^\perp)_{\{t_1\}}$ is a $[q^2, q^2-4, 4]$ almost MDS code over $\gf(q)$ with weight distribution
          $$
          A_k((\C^\perp)_{\{t_1\}})=\frac{\binom{k}{1} \binom{q^2}{k}}{\binom{q^2+1}{1} \binom{q^2}{k-1}} A_k(\C^\perp),
          $$
          where $1 \leq k \leq q^2$, and $A_k(\C^\perp)$ was given in Lemma \ref{lem-iccodedual}.
\end{enumerate}
\end{theorem}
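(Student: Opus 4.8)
The plan is to obtain all four parts from the general shortening and puncturing theorems of Section~2, exploiting that $t=1$ lies below both minimum distances, namely $1<\min\{d,d^\perp\}=\min\{q^2-q,4\}=4$ for $q\ge 4$, and that the ovoid code is \emph{homogeneous}. The inputs I would record first are the weight enumerator of $\C$ from (\ref{eqn-wtdistovidcode}), so that $\C$ has exactly the two nonzero weights $q^2-q$ and $q^2$; the weight distribution of $\C^\perp$ from Lemma~\ref{lem-iccodedual}; and the values $d=q^2-q$, $d^\perp=4$. The one genuinely structural step is to show that each incidence structure $(\cP(\C),\cB_i(\C))$ is a $1$-design, equivalently that $\C$ is homogeneous. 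I would prove this uniformly for every ovoid (not merely elliptic quadrics and Tits ovoids) by the standard plane count: a plane meets the ovoid in $1$ point (tangent) or $q+1$ points (secant), and through each point of $\cV$ there pass exactly one tangent plane and $q^2+q$ secant planes; hence every coordinate is omitted by the same number of weight-$q^2$ codewords and by the same number of weight-$(q^2-q)$ codewords. By Theorem~\ref{thm-tdesign-wtcode} (applicable since $1<\min\{d,d^\perp\}$) the dual $\C^\perp$ then supports $1$-designs as well.

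Next I would fix the four dimensions and the two duality identities in one pass using Theorem~\ref{thm-shortenedcode}. Its part~(2), applied to $\C^\perp$ (legitimate since $1<d^\perp$), shows $(\C^\perp)^{\{t_1\}}$ has dimension $q^2-3$ and hence $\C_{\{t_1\}}=((\C^\perp)^{\{t_1\}})^\perp$ has dimension $3$; applied to $\C$ it shows $\C^{\{t_1\}}$ has dimension $4$ and $(\C^\perp)_{\{t_1\}}$ has dimension $q^2-4$. Part~(1) of the same theorem supplies the identities $(\C_{\{t_1\}})^\perp=(\C^\perp)^{\{t_1\}}$ and $((\C^\perp)_{\{t_1\}})^\perp=\C^{\{t_1\}}$, which recast parts~(2) and~(3) of the statement as assertions about a punctured dual code and a punctured ovoid code, respectively.

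I would then compute the weight distributions. For the two genuine shortenings I apply Theorem~\ref{thm-sctcode} with $t=1$: for $\C_{\{t_1\}}$ this collapses to $A_k=\frac{n-k}{n}A_k(\C)$, returning $q(q^2-1)$ words of weight $q^2-q$ and $q-1$ of weight $q^2$; for $(\C^\perp)_{\{t_1\}}$ it yields verbatim the displayed ratio $\frac{\binom{k}{1}\binom{q^2}{k}}{\binom{q^2+1}{1}\binom{q^2}{k-1}}A_k(\C^\perp)$. For the punctured ovoid code $\C^{\{t_1\}}$ of part~(3) I apply Theorem~\ref{thm-pctcode} with $t=1$; because $A_{k+i}(\C)$ is supported on $k+i\in\{q^2-q,q^2\}$, only four output weights survive, and the short computation produces $q^2(q-1)^2$, $q(q^2-1)$, $q^2(q-1)$, $q-1$ words of weights $q^2-q-1,\,q^2-q,\,q^2-1,\,q^2$, matching the stated enumerator. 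Since only homogeneity is used, Prange's Theorem~\ref{thm-Prange} gives these single-coordinate formulas just as directly.

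Finally I read the minimum distances and the MDS-type labels off the computed distributions rather than from coarse bounds: part~(1) has $d=q^2-q$ and part~(3) has $d=q^2-q-1$ since the relevant coefficients are positive and no smaller weight appears; part~(4) has $d=4$ because $A_4\ne 0$ while $A_1=A_2=A_3=0$, and $n-k=4$ makes it almost MDS; for part~(2) I note from Theorem~\ref{thm-pctcode} (or Prange) that $A_3((\C^\perp)^{\{t_1\}})=\frac{4}{n}A_4(\C^\perp)>0$ with all smaller weights zero, so $d=3=n-(q^2-3)$, again almost MDS. I expect the main obstacle to be the single qualitative step, homogeneity of the ovoid code, because it is exactly what forces the single-coordinate weight distributions to be independent of the choice of $t_1$ and thereby licenses the transfer theorems; everything else is bookkeeping with Theorems~\ref{thm-shortenedcode}, \ref{thm-sctcode} and~\ref{thm-pctcode}. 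The clean resolution is the combinatorial plane count above, which avoids any appeal to the incomplete classification of ovoids or to their automorphism groups.
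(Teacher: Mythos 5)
Your proposal is correct, and its overall skeleton (establish a design/transitivity property for the ovoid code, then invoke Theorems \ref{thm-shortenedcode}, \ref{thm-sctcode} and \ref{thm-pctcode} with $t=1$) matches the paper's. The genuine difference is the design-theoretic input. The paper proves that $(\cP(\C),\cB_k(\C))$ is a $3$-design for every $k$ — directly for the weight-$q^2$ codewords, and by citing the known $3$-$(q^2+1,q^2-q,(q-2)(q^2-q-1))$ design supported by the minimum-weight codewords — and then transfers this to $\C^\perp$ via Theorem \ref{thm-tdesign-wtcode}. You instead prove only the $1$-design (homogeneity) property by the tangent/secant plane count through a point of the ovoid, which is all that single-coordinate shortening and puncturing require. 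Your route is more elementary and self-contained (it avoids the external $3$-design citation and, as you note, any classification of ovoids), but it buys strictly less: the $3$-design property is what would let one handle $|T|\in\{2,3\}$, which the paper alludes to at the end of the section. Two smaller divergences: for the minimum distance in part (2) the paper applies the MacWilliams identity to the enumerator of $\C_{\{t_1\}}$, while you read $A_3=\frac{4}{n}A_4(\C^\perp)>0$ off the puncturing formula; and for part (4) the paper argues by contradiction with the MDS property, while you observe $A_4((\C^\perp)_{\{t_1\}})=\frac{q^2-3}{q^2+1}A_4(\C^\perp)>0$ directly from the shortening formula — your versions are arguably cleaner. One caution: Theorem \ref{thm-Prange} as quoted tacitly assumes homogeneity, so your remark that Prange's formulas apply only after the plane count is the right order of logic and should be kept explicit.
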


\begin{proof}
It is straightforward to see that two codewords of weight $q^2$ in $\C$ have the same support if and only if one is a nonzero
multiple of the other. Hence, $\cB_{q^2}(\C)$ has no repeated block and cardinality $q^2+1$, and every $q^2$-subset of
$\cP(\C)$ appears exactly once in $\cB_{q^2}(\C)$. Consequently, $(\cP(\C), \cB_{q^2}(\C))$ is a $3$-$(q^2+1, q^2, q^2-2)$
design. It is known that  $(\cP(\C), \cB_{q^2-q}(\C))$ is a $3$-$(q^2+1, q^2-q, (q-2)(q^2-q-1))$ simple design \cite[p. 327]{DingBk2}.
Hence, $(\cP(\C), \cB_{k}(\C))$ is a $3$-design for all $k$ with $0 \leq k \leq q^2+1$. It then follows from Theorem
\ref{thm-tdesign-wtcode} that  $(\cP(\C^\perp), \cB_{k}(\C^\perp))$ is a $3$-design for  all $k$ with $0 \leq k \leq q^2+1$.

The desired conclusions on $\C_{\{t_1\}}$ then follow from Theorem \ref{thm-sctcode} and the weight enumerator of
$\C$ given in (\ref{eqn-wtdistovidcode}). Using the weight enumerator of $\C_{\{t_1\}}$  in (\ref{eqn-4201}) and the
MacWilliams identity, one can prove that the minimum distance $d((\C_{\{t_1\}})^\perp)=3$. Thus, $(\C_{\{t_1\}})^\perp$ is a $[q^2, q^2-3, 3]$ almost MDS code over $\gf(q)$.

We now prove the desired conclusions on  $((\C^\perp)_{\{t_1\}})^\perp$. By Theorem \ref{thm-shortenedcode}, we have
$$
((\C^\perp)_{\{t_1\}})^\perp =\C^{\{t_1\}}.
$$
It then follows from Theorem \ref{thm-pctcode} that
\begin{eqnarray}\label{eqn-4303}
A_k(((\C^\perp)_{\{t_1\}})^\perp) = A_k(\C^{\{t_1\}}) =
 \sum_{i=0}^1 \frac{\binom{q^2}{k} \binom{k+i}{1}}{\binom{q^2}{k-1+i} \binom{q^2+1}{1}}  A_{k+i} (\C).
\end{eqnarray}
The desired conclusions then follow from the weight enumerator of $\C$ in (\ref{eqn-wtdistovidcode}).

Since $d(\C^\perp)=4$, by definition $d((\C^\perp)_{\{t_1\}}) \geq 4$. If  $d((\C^\perp)_{\{t_1\}}) = 5$, then
$(\C^\perp)_{\{t_1\}}$ would be a $[q^2, q^2-4, 5]$ MDS code, and $((\C^\perp)_{\{t_1\}})^\perp$ would be
$[q^2, 4, q^2-3]$ MDS code, which is a contradiction. Therefore,  $d((\C^\perp)_{\{t_1\}}) = 4$. The desired
conclusion on the weight distribution of $(\C^\perp)_{\{t_1\}}$ then follows from (\ref{eqn-4303}) and the weight enumerator of
$\C$ given in (\ref{eqn-wtdistovidcode}).
\end{proof}

Notice that all ovoid codes meet the Griemer bound. The shortened codes and their duals documented in
Theorem \ref{thm-4301} are very interesting due to the following.
\begin{itemize}
\item All the four classes of codes in Theorem \ref{thm-4301} support $2$-designs.
\item Both $\C_{\{t_1\}}$ and $(\C^\perp)_{\{t_1\}}$ meet the Griesmer bound.
\end{itemize}

Using the Assmus-Mattson theorem (see \cite{AM69} or \cite[Chapter 4]{DingBk2}), Theorem \ref{thm-4301} and Lemma \ref{lem-simpled}, one can prove the following theorem.
We omit the proofs here.

\begin{theorem}\label{thm-4302}
Let $q \geq 4$, and let $\C$ be a $[q^2+1, 4, q^2-q]$ code over $\gf(q)$. For any coordinate $t_1$,
the following hold.
\begin{enumerate}
\item  The incidence structure $(\cP(\C_{\{t_1\}}), \cB_{q^2-q}(\C_{\{t_1\}}) )$  is a $2$-$(q^2, q^2-q, q^2-q-1)$ simple design.

\item The incidence structure $(\cP((\C_{\{t_1\}})^\perp), \cB_{q^2-q}((\C_{\{t_1\}})^\perp) )$  is a $2$-$(q^2, 3, \lambda)$ simple design for some integer $\lambda$.

\item The incidence structure $(\cP(((\C^\perp)_{\{t_1\}})^\perp),  \cB_{q^2-q-1}(((\C^\perp)_{\{t_1\}})^\perp) )$
         is a $2$-$(q^2, q^2-q-1, (q-2)(q^2-q-1))$ simple design. The complement of this design is a $2$-$(q^2,q+1,q)$ design.

\item   The incidence structure $(\cP(((\C^\perp)_{\{t_1\}})^\perp),  \cB_{q^2-q}(((\C^\perp)_{\{t_1\}})^\perp) )$
         is a $2$-$(q^2, q^2-q, q^2-q-1)$ simple design.  The complement of this design is a Steiner system
         $2$-$(q^2, q, 1)$, i.e., an affine plane.

\item  The incidence structure $(\cP((\C^\perp)_{\{t_1\}}),  \cB_{4}((\C^\perp)_{\{t_1\}}))$ is a $2$-$(q^2, 4, \lambda)$ simple design
          for some integer $\lambda$.
\end{enumerate}
\end{theorem}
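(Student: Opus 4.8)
The plan is to derive all five design conclusions from two applications of the Assmus--Mattson theorem (\cite{AM69}, \cite[Chapter 4]{DingBk2}), organizing the six codes into the two dual pairs already identified in Theorem \ref{thm-4301}, namely $\{\C_{\{t_1\}},\,(\C_{\{t_1\}})^\perp\}$ and $\{(\C^\perp)_{\{t_1\}},\,((\C^\perp)_{\{t_1\}})^\perp=\C^{\{t_1\}}\}$. In both cases I will take $t=2$ and verify the Assmus--Mattson hypothesis that the number $s$ of nonzero weights of the dual code lying in the range $[1,\,n-t]$ satisfies $s\le d-t$, where $n=q^2$ so that $n-t=q^2-2$. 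Each successful application simultaneously endows the codewords of fixed weight in both members of the pair with $2$-design structure, and the explicit $\lambda$ values are then read off from the formula \eqref{eq:rl-Ac-d}.

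For the first pair I set $\C'=(\C_{\{t_1\}})^\perp$, which has minimum distance $d=3$, so that $(\C')^\perp=\C_{\{t_1\}}$ carries the two nonzero weights $q^2-q$ and $q^2$ from \eqref{eqn-4201}. Only $q^2-q$ lies in $[1,\,q^2-2]$, whence $s=1=d-t$ and the theorem applies. This yields at once that $\cB_{q^2-q}(\C_{\{t_1\}})$ supports a $2$-design (the block set of Claim~(1)) and that $\cB_3((\C_{\{t_1\}})^\perp)$ supports a $2$-design (Claim~(2)). For the second pair I set $\C''=(\C^\perp)_{\{t_1\}}$ with $d=4$, whose dual $\C^{\{t_1\}}$ carries the four nonzero weights $q^2-q-1,\,q^2-q,\,q^2-1,\,q^2$ from Theorem \ref{thm-4301}(3). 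The crucial point is that the two top weights $q^2-1$ and $q^2$ exceed $q^2-2$ and so are not counted, leaving $s=2=d-t$; the theorem again applies with $t=2$. This produces the $2$-designs held by the weight-$(q^2-q-1)$ and weight-$(q^2-q)$ codewords of $\C^{\{t_1\}}$ (Claims~(3) and~(4)) and by the weight-$4$ codewords of $(\C^\perp)_{\{t_1\}}$ (Claim~(5)).

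Simplicity of each block set follows from Lemma \ref{lem-simpled}: the relevant weight $k$ satisfies $k-\lfloor (k+q-2)/(q-1)\rfloor<d$. This is automatic whenever $k=d$ (covering Claims (1), (2), (3), (5)), and for the one off-minimum case, $k=q^2-q$ in the code $\C^{\{t_1\}}$ of minimum distance $q^2-q-1$, a short computation gives $k-\lfloor(q^2-2)/(q-1)\rfloor=q^2-2q<q^2-q-1$ for $q\ge 4$. The parameters $\lambda$ are then obtained from \eqref{eq:rl-Ac-d} and the weight enumerators of Theorem \ref{thm-4301}: $w=q^2-q$ with $A_{q^2-q}=q(q^2-1)$ gives $\lambda=q^2-q-1$ (Claims (1), (4)), while $w=q^2-q-1$ with $A_{q^2-q-1}=q^2(q-1)^2$ gives $\lambda=(q-2)(q^2-q-1)$ (Claim (3)). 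Finally, the complementary designs in Claims~(3) and~(4) follow from the standard identity $\bar\lambda=b-2r+\lambda$ for the complement of a $2$-design, with $b=A_w/(q-1)$ and $r=bk/n$; the arithmetic returns $\bar\lambda=q$ for the $2$-$(q^2,q+1,q)$ design and $\bar\lambda=1$ for the Steiner system $2$-$(q^2,q,1)$, the latter being an affine plane.

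The main obstacle is the verification of the Assmus--Mattson inequality $s\le d-t$ for the second pair, where it holds with equality: it is essential to observe that the two largest weights of $\C^{\{t_1\}}$ strictly exceed $n-t=q^2-2$ and therefore do not contribute to $s$. Once this is in hand, the remainder is routine bookkeeping with the known weight enumerators, the formula \eqref{eq:rl-Ac-d}, and the complementary-design identity.
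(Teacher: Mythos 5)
Your proposal is correct and follows exactly the route the paper prescribes: the paper omits the proof but states that it follows from the Assmus--Mattson theorem, Theorem \ref{thm-4301} and Lemma \ref{lem-simpled}, which are precisely the tools you use, and your verifications (the $s\le d-t$ checks for the two dual pairs, the simplicity check at $k=q^2-q$, the $\lambda$ values from \eqref{eq:rl-Ac-d}, and the complement computations giving $\bar\lambda=q$ and $\bar\lambda=1$) all check out. In effect you have supplied the details the authors chose to omit.
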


Since $(\cP(\C^\perp), \cB_{k}(\C^\perp))$ is a $3$-design for  all $k$ with $0 \leq k \leq q^2+1$, we can similarly determine
the parameters and weight distributions of the codes $(\C^\perp)_{\{t_1, t_2\}}$ and  $(\C^\perp)_{\{t_1, t_2,t_3\}}$. However,
these shortened codes are less interesting.

\section{Summary and concluding remarks}

The main contributions of this paper are the following.
\begin{itemize}
\item It was proved in Theorem \ref{thm-maingen1} that every linear code $\C$ over $\gf(q)$ with minimum distance $d \geq 2$ is a shortened code of $\C(m,q,\alpha)^\perp$
for some $m$, $q$ and $\alpha$, where $\alpha$ is a generator of $\gf(q)^*$ and $\C(m,q,\alpha)$ was defined in
(\ref{eqn-irredcycliccode}). This showed the importance of the shortening technique and the family of cyclic codes
$\C(m,q,\alpha)^\perp$.
\item It was proved in Theorem \ref{thm-maingen2} that every linear code over $\gf(q)$ with minimum distance at least $3$ is a shortened code of a Hamming code over $\gf(q)$.  This showed the importance of the shortening technique and the family of Hamming codes.
\item The parameters and weight distributions of two families of shortened codes of the Hamming codes and their duals were settled in Theorem
\ref{thm-Hamshort1}. All of the shortened codes are optimal.
\item The parameters and weight distributions of another two families of shortened codes of the Hamming codes and their duals were settled in Theorem
\ref{thm-Hamshort2}. All of the shortened codes are optimal.
\item The parameters of three families of shortened codes of the Reed-Muller codes $\cR(1,m)$ and  $\cR(m-2,m)$ were settled in Theorems
\ref{THM-4305} and \ref{THM-4309}. All of the shortened codes are optimal.
\item The parameters of another two families of shortened codes of  the Reed-Muller codes $\cR(1,m)$ and  $\cR(m-2,m)$ were settled in Theorem
\ref{THM-4306}. All of the shortened codes are either optimal or almost optimal.
\item The parameters of shortened codes of the oviod code and its dual were settled in Theorem \ref{thm-4301}. The shortened codes are either optimal or almost optimal.
\item Five families of $2$-designs were obtained from the shortened codes of the ovoid codes and their duals and were documented in Theorem \ref{thm-4302}.  Some of the $2$-designs are interesting.
\end{itemize}
In summary, eleven infinite families of optimal shortened codes with new parameters were presented in this paper.

Since every linear code with minimum weight at least $3$ is a shortened code of a Hamming code, some shortened codes must have bad parameters and some shortened codes must have good or optimal parameters. To obtain an optimal or good shortened code, a linear code $\C$ and the coordinate set $T$ for shortening must be properly selected.



\begin{thebibliography}{99}


\bibitem{AM69}
E. F. Assmus, Jr., H. F.  Mattson, Jr.,
``New 5-designs," \emph{J. Comb. Theory Ser. A}, vol. 6, no. 2, pp. 122--151, March 1969.

\bibitem{Beut79}
A. Beutelspacher, ``On $t$-covers in finite projective spaces,"
J. Geom., vol. 12, no. 1, pp, 10--16, 1979.




\bibitem{Chen72}
C. L. Chen, ``On shortened finite geometry codes," \emph{Information and Control}, vol. 20, pp. 216--221, 1972.


\bibitem{DingBk2}
C. Ding, \emph{Designs from Linear Codes}, World Scientific, Singapore, 2018.

\bibitem{DL17}
C. Ding, C. Li, ``Infinite families of 2-designs and 3-designs from linear codes,"
\emph{Discrete Math.}, vol. 340, no. 10, pp. 2415--2431, Oct. 2017.

\bibitem{DY13}
C. Ding, J. Yang, ``Hamming weights in irreducible cyclic codes," \emph{Disc. Math.},  vol. 313, no. 4,
pp. 434--446, April 2013.

\bibitem{Goldwasser}
J. L. Goldwasser, ``Shortened and punctured codes and the MacWilliams identity,"
\emph{Linear Algebra and Its Applications}, vol. 253, 1--13, 1997.

\bibitem{Macus}
M. Grassl, Code tables: bounds on the parameters of various types of codes,
http://www.codetables.de.

\bibitem{HS73}
H. J. Helgert, R. D. Stinaff, ``Shortened BCH codes," \emph{IEEE Trans. Inf. Theory}, vol. 19, no. 6,
pp. 818--820, 1973.

\bibitem{HengDing19}
Z. Heng, C. Ding, ``A construction of $q$-ary linear codes with irreducible cyclic codes,"
\emph{Des. Codes Cryptogr.}, vol. 87, pp. 1087--1108, 2019.

\bibitem{HWW20}
Z. Heng, W. Wang, Y. Wang, ``Projective binary linear codes from special Boolean functions,"
\emph{Appl. Algebra Eng. Commun. Comput.},  https://doi.org/10.1007/s00200-019-00412-z

\bibitem{Hsu71}
H. T. Hsu, ``A class of binary shortened cyclic codes for a compound channel,"
\emph{Information and Control}, vol. 18, pp. 126--139, 1971.



\bibitem{HP03}
W. C. Huffman, V. Pless, \emph{Fundamentals of
Error-Correcting Codes}, Cambridge University Press, Cambridge, 2003.

\bibitem{Kasa69}
T. Kasami, ``Optimum shortened cyclic codes for burst-error correction," \emph{IEEE Trans. Inf. Theory}, vol. 9, no. 2,
105--109, March 1963.


\bibitem{Lin72}
S. Lin, ``Shortened finite geometry codes," \emph{IEEE Trans. Inf. Theory}, vol. 18, no. 5, pp. 692--696, Sept. 1972.

\bibitem{MS77}
F. J. MacWilliams, N. J. A. Sloane, \emph{The Theory of Error-Correcting Codes},
North-Holland, Amsterdam,  1977.

\bibitem{NZ15}
P. Nelson, S. H. M. van Zwam, ``On the existence of asymptotically good linear codes in minor-closed class,"
\emph{IEEE Trans. Inf. Theory}, vol. 61, no. 3, pp. 1153--1158, March 2015.


\bibitem{TDX19}
C. Tang, C. Ding, M. Xiong, ``Steiner systems $S(2, 4, \frac{3^m-1}{2})$ and $2$-designs from
ternary linear codes of length $\frac{3^m-1}{2}$,"  \emph{Des. Codes Cryptogr.}, vol. 87, no. 12,
         pp. 2793--2811, December 2019.

\bibitem{TDX192}
C. Tang, C. Ding, M. Xiong, ``Codes, differentially $\delta$-uniform functions and $t$-designs,"
\emph{IEEE Trans. Inf. Theory}, vol. 66, no. 6, pp. 3691--3703, June 2020.


\bibitem{YP}
A. Yardi, R. Pellikaan, ``On shortened and punctured cyclic codes," arXiv:1705.09859v1 [cs.IT].

\end{thebibliography}
\end{document}